\documentclass{article}

\usepackage{arxiv} 
\usepackage[T1]{fontenc}    
\usepackage[utf8]{inputenc} 

\usepackage[english]{babel}
\usepackage[numbers]{natbib}  

\bibliographystyle{apa}

\setcitestyle{authoryear,open={(},close={)}}

\usepackage{graphicx} 
\usepackage{enumerate}
\usepackage{multirow}
\usepackage{ulem}
\usepackage{hyperref}       
\usepackage{url}            
\usepackage{hyperref}
\hypersetup{
	colorlinks=true,
	linkcolor=blue,
	filecolor=blue,
	citecolor = blue,      
	urlcolor=cyan,
}
\usepackage{booktabs}       
\usepackage{multicol}
\usepackage{amsmath,amsfonts,amssymb}
\usepackage{comment}

\usepackage{amsthm}         
\numberwithin{equation}{section}

\usepackage{nicefrac}       
\usepackage{microtype}      
\usepackage{lipsum}
\usepackage{float} 
\usepackage{bm}
\usepackage{adjustbox}
\usepackage[linesnumbered,lined,boxed,commentsnumbered]{algorithm2e}
\usepackage{soul}
\usepackage{makecell}

\newcommand{\twee}{\operatorname{Tweedie}}


\newcommand{\diag}{\operatorname{diag}}

\newcommand{\Esp}[1]{\mathrm{E}\! \left[ #1 \right]}
\newcommand{\Var}[1]{\mathrm{Var}\! \left[ #1 \right]}

\newcommand{\indep}{\perp \!\!\! \perp}

\newtheorem{definition}{Definition}[subsection]
 
\newtheorem{prop}{Property}[subsection] 

\newtheorem{proposition}{Proposition}
\newtheorem{example}{Example}

\usepackage{caption}

\linespread{1.15}

\usepackage{tikz}
\usetikzlibrary{shapes,arrows}

\usetikzlibrary{decorations.pathreplacing}

\title{Comparison of Offset and Ratio Weighted Regressions in Tweedie Models with Application to Mid-Term Cancellations}

\author{
	Jean-Philippe Boucher \& Raïssa Coulibaly \\
	Chaire Co-operators en analyse des risques actuariels\\
	Département de mathematiques\\
	UQAM \\
}

\begin{document}
\newpage
\maketitle
	
\begin{abstract}
In property and casualty insurance, particularly in automobile insurance, risk exposure is commonly assumed to be proportional to the duration of coverage. This assumption leads to two standard estimation strategies: the ratio approach, which normalizes the response variable (e.g., claim cost or premium) by the exposure, and the offset approach, which incorporates a transformation of the exposure—typically its logarithm—as a fixed regressor in the mean structure of the model. Although both approaches rely on the same proportionality assumption, they are not equivalent when the response variable follows a Tweedie distribution, a framework widely used in insurance analytics. In this paper, we show that each approach can be implemented independently and yields a consistent estimator of the true mean parameter vector. We then show that the offset approach is asymptotically more efficient than the ratio approach, a result established both theoretically and through simulation studies.
However, when evaluated from the perspective of portfolio-level financial balance, the ratio approach exhibits superior performance, particularly in the presence of heterogeneous or truncated exposures arising from mid-term policy cancellations. These theoretical results are illustrated through an empirical analysis of an automobile insurance portfolio with a high cancellation rate, highlighting the practical implications of model choice for premium estimation under variable exposure conditions.
\end{abstract}

\keywords{Tweedie distribution, weighted regression, offset variable, policy cancellation, positive definite matrix}

\newpage 
\section{Introduction}\label{introduction}

In insurance, particularly within the property and casualty sector like automobile insurance, risk exposure has traditionally been associated with the duration of coverage (see \cite{denuit2007actuarial}). Risk exposure refers to the extent to which an insured entity is subject to potential claims or losses over the policy period. For example, under a linear interpretation of exposure, a one-year policy would be considered twice as risky as a six-month policy, assuming identical characteristics for the insured. This concept is central to many actuarial research projects and practical applications.

Given its importance, it becomes essential to develop a comprehensive framework for incorporating various dimensions of risk exposure into the modelling of random variables used for ratemaking. Such models ensure that the premiums charged accurately reflect the underlying risks.

More formally, we are interested in modelling the random variable \(Y_i\), representing either the total number of claims or the total claims amount (also referred to as the loss cost) for contracts \(i = 1, \dots, n\), where \(n\) denotes the total number of contracts in the portfolio. Assuming that the expected value corresponds to the premium (as discussed in \cite{denuit2007actuarial} and \cite{frees2014predictive}), the premium can be expressed as:

\begin{equation}\label{annualprem}
    \Esp{Y_i|\mathbf{X_i}} = g^{-1}\left(\mathbf{X}_i^\top \bm{\beta} \right),
\end{equation}

where \(\bm{\beta}\) is a parameter vector to be estimated, and \(\mathbf{X_i} = (x_{i,0}, \dots, x_{i,q})^\top\) represents the vector of risk characteristics associated with  contract \(i\) (with \(q\) being a positive integer). These risk characteristics—such as age, sex, and the value of the insured goods—are used as covariates in the regression model, allowing the premium to vary based on these factors. The function \(g(\cdot)\) serves as a link function that connects the score \(\mathbf{X}_i^\top \bm{\beta}\) to the expected value, similar to the approach in Generalized Linear Models (GLMs) (see \cite{nelder1972generalized}).

An analysis of historical motor insurance data reveals significant variability in exposure periods among insured individuals. This variation often occurs when coverage is cancelled before the contract ends due to reasons such as the sale or replacement of a vehicle, theft, or a total loss resulting from an accident. For modelling purposes, insurers typically treat the exposure period as a fraction of the total number of days in a year. This fraction, referred to as \textbf{risk exposure} and denoted by \(t\), is a positive value between 0 and 1, reflecting the proportion of the year that the contract was active.

Formally, we consider a portfolio of \(n\) insured contracts, indexed by \(i\), some of which have a risk exposure, denoted by \(t_i\),  that is less than 1. To account for this variation in exposure when calculating premiums and estimating model parameters, it is necessary to generalize Equation (\ref{annualprem}). This paper examines two approaches within the maximum likelihood estimation framework to address this generalization:

\begin{enumerate}
\item The first approach assumes that the exposure \(t_i\) is proportional to the mean of the conditional distribution \(Y_i|\mathbf{X_i}\), where \(f_{Y_i}(\cdot)\) represents the density or probability mass function of \(Y_i|\mathbf{X_i}\), and \(y_i\) denotes the observed loss cost for contract \(i\). Under this assumption, the log-likelihood function for estimating the parameter vector \(\bm{\beta}\) is expressed as:

\[
\ell(\boldsymbol{\beta}|y_1,\ldots,y_n) = \sum_{i=1}^{n} \log\left( f_{Y_i}(y_i) \right),
\]

where the mean of \(Y_i\) being linearly proportional to \(t_i\), such that:

\[
\Esp{Y_i|\mathbf{X_i}} = g^{-1}\left(\mathbf{X}_i^\top \bm{\beta} \right) \times t_i.
\]

In this framework, \(g(\cdot)\) is typically chosen as a logarithmic link function, resulting in:

\[
\exp\left(\mathbf{X}_i^\top \bm{\beta} \right) t_i = \exp\left(\mathbf{X}_i^\top \bm{\beta} + \log(t_i)\right),
\]

where \(\log(t_i)\) acts as an offset variable. This method is thus referred to as the \textbf{offset approach}.

\item The second approach, referred to as \textbf{ratio approach}, focuses on annualizing the random variable \(Y_i\) to facilitate direct comparisons of policyholders' claims on an annual basis. In this method, a new random variable \(Z_i = \frac{Y_i}{t_i}\) is defined for each contract \(i = 1, \dots, n\), resulting a normalized set of claim amounts \(z_1, \dots, z_n\). 

To distinguish it from the offset approach, the mean of \(Z_i|\mathbf{X_i}\) is assumed to no longer depend on \(t_i\), and the density or probability mass function of \(Z_i|\mathbf{X_i}\) incorporates the risk exposure through this weighting. The log-likelihood function for estimating \(\bm{\beta}\) is expressed as:

\[
\ell(\boldsymbol{\beta}|z_1,\ldots,z_n) = \sum_{i=1}^{n}t_i \log\left(f_{Z_i}(z_i)\right),
\]

where \(f_{Z_i}(\cdot)\) is the density or probability mass function of \(Z_i|\mathbf{X_i}\), and the mean is given by:

\[
\Esp{Z_i|\mathbf{X_i}} = g^{-1}\left(\mathbf{X}_i^\top \bm{\beta} \right),
\]

where \(g(\cdot)\) a logarithmic link, the expected loss cost for an insured \(i\) exposed to risk over a period of \(t_i\) years is then expressed:

\[
\Esp{Y_i|\mathbf{X_i}} = \Esp{t_i Z_i|\mathbf{X_i}} = \exp\left(\mathbf{X}_i^\top \bm{\beta} \right) \times t_i = \exp\left(\mathbf{X}_i^\top \bm{\beta} + \log(t_i)\right),
\]

which has the same form as the expected loss cost from the offset approach.
     
\end{enumerate}

It is important to emphasize that in the offset approach, the random variable \(Z_i\) can be used to estimate the parameter vector \(\bm{\beta}\), but this method should not be confused with the ratio approach. Similarly, in the ratio approach, the random variable \(Y_i\) can also be used, but this method should not ne mistaken for the offset approach. 

We refer to the offset approach when risk exposure is only incorporated into premium modelling as an offset variable. Conversely, we refer to the ratio approach when risk exposure is only included as a weight in the premium modelling process. These two approaches represent distinct methods for integrating risk exposure, ensuring that premiums are appropriately aligned with the insured's actual level of risk.

\subsection{Examples for claim counts} \label{introduction.1}

To clarify the issue, we can illustrate the situation with a preliminary comparison of the two approaches,  assuming that \( Y_i \) represents the number of claims for insured \( i \).

\begin{example} \label{exPoisson}
Under the ratio approach, we assume that \( Z_i,  i=1, \ldots, n \) follows a Poisson distribution with mean \(\zeta_i = \exp(\mathbf{X}^\top_i \boldsymbol{\beta})\). In this framework, it can be shown that the ratio approach corresponds to a Poisson weighted regression and yields the same inference results of the parameter vector \(\bm{\beta}\) as in the offset approach, where  \( Y_i,  i=1, \ldots, n \) is Poisson distributed with mean \( \mu_i = t_i \exp(\mathbf{X}^\top_i \boldsymbol{\beta}) = \exp(\mathbf{X}^\top_i \boldsymbol{\beta} + \log(t_i)) \).
\end{example}

\begin{proof}
The log-likelihood function of the offset and ratio approaches are given as follows:

\begin{itemize}
	\item \textbf{Offset Approach:}
	\begin{align*}   
		\ell(\boldsymbol{\beta}|y_1,\ldots,y_n) 
		&= \sum_{i=1}^{n} \log \left(\frac{\exp(-\mu_i) \mu_i^{y_i}}{y_i!} \right)\\
		&= \sum_{i=1}^{n} \left(-t_i \exp\left(\mathbf{X}^\top_i \boldsymbol{\beta}\right) + y_i \left(\mathbf{X}^\top_i \boldsymbol{\beta}\right) + y_i \log(t_i) - \log(y_i !)\right)  \\
		&\propto \sum_{i=1}^{n} \left(-\exp(\mathbf{X}^\top_i \boldsymbol{\beta}) + z_i \left(\mathbf{X}^\top_i \boldsymbol{\beta}\right)\right) \times t_i.
	\end{align*}
	
	\item \textbf{Ratio Approach:}
	\begin{align*}   
		\ell(\boldsymbol{\beta}|z_1,\ldots,z_n) 
		&= \sum_{i=1}^{n} t_i \log \left(\frac{\exp(-\zeta_i) \zeta_i^{z_i}}{z_i!} \right)\\
		&= \sum_{i=1}^{n} \left(-t_i \exp\left(\mathbf{X}^\top_i \boldsymbol{\beta}\right) + t_iz_i \left(\mathbf{X}^\top_i \boldsymbol{\beta}\right)  - \log((t_iz_i) !)\right)  \\
		&\propto \sum_{i=1}^{n} \left(-\exp(\mathbf{X}^\top_i \boldsymbol{\beta}) + z_i \left(\mathbf{X}^\top_i \boldsymbol{\beta}\right)\right) \times t_i.
	\end{align*}
\end{itemize}

For the Poisson distribution, the offset and the ratio approaches are equivalent, meaning there is no need to choose between them.
\end{proof}

\begin{example}
In the offset approach, we assume that \( Y_i,  i=1, \ldots, n \) follows a zero-inflated Poisson distribution characterized by the parameters \( \phi \) (for the zero-inflation) and \( \mu_i =  \exp(\mathbf{X}^\top_i \boldsymbol{\beta} + \log(t_i))\) (from the Poisson distribution). In the ratio approach, we assume that \( Z_i,  i=1, \ldots, n \) follows a zero-inflated Poisson distribution characterized by the parameters \( \phi \) (for the zero-inflation) and \( \zeta_i =  \exp(\mathbf{X}^\top_i \boldsymbol{\beta})\) (from the Poisson distribution). It can be shown that these two approaches yield different inference results for the estimation of the  parameter vector \(\bm{\beta}\).
\end{example}

\begin{proof}
For a portfolio of \( n \) contracts, we can develop the log-likelihood of the offset approach to see that both approaches are not equivalent:

\begin{align*}   
\ell(\boldsymbol{\beta}|y_1,\ldots,y_n) 
&= \sum_{i=1}^{n} \log \left(I(y_i=0) \phi + (1-\phi) \frac{\exp(-\mu_i) \mu_i^{y_i}}{y_i!} \right)\\
& \ne \sum_{i=1}^{n} \log \left(I(z_i=0) \phi + (1-\phi) \frac{\exp(-\zeta_i) \zeta_i^{z_i}}{z_i!} \right) \times t_i 
= \ell(\boldsymbol{\beta}|z_1,\ldots,z_n), 
\end{align*}

In contrast to the Poisson distribution, a decision must be made between the offset and ratio methods when we want to suppose that the number of claims follows this form of zero-inflated distribution.
\end{proof}

Depending on the distribution used for \(Y_i\), it is not straightforward to determine whether both approaches are equivalent. Even when the underlying distribution is assumed to be the same, the estimated parameters obtained from each approach may differ, potentially leading to different premiums for individual policyholders. Ideally, if the data-generating process for insurance claims were fully understood, one could determine which modelling approach is most appropriate. However, in practice, this is rarely the case, making it difficult to justify one formulation over the other on theoretical grounds alone. Consequently, the choice between approaches often relies on a combination of empirical performance and desirable mathematical properties.

In this context, it is worth noting that \cite{denuit2019effective} already provide some comparisons between these two approaches. In particular, they show that the offset approach can be interpreted as a {weighted regression} when the response variable follows a Tweedie distribution. Building on their work, we extend the comparison further by emphasizing the role of stochastic orders and the principle of financial balance, which provide additional insights into the structural differences between the two modeling approaches.

\subsection{Structure of the paper} \label{introduction.2}

This paper emphasizes the differences between the offset and ratio regression approaches in modeling contract loss costs using the {Tweedie distribution}, rather than focusing on claim frequency. Section~\ref{rap.twee} introduces the notation and model assumptions, detailing key properties of the Tweedie distribution and the parameter inference methods used. It also provides the technical framework required to establish the equivalence results and the asymptotic comparisons developed in the remainder of the paper.  Section~\ref{ofwe.twee} then explains the offset and ratio approaches, as well as the link between them. We show that both approaches can be viewed as weighted regressions, where only the form of the weight differentiates the approaches. Section~\ref{balance.Twee} presents an analysis of the consistency of the parameter vector estimators for both approaches, demonstrates the asymptotic efficiency of the offset-approach parameter vector estimator, and examines the gap between observed aggregate loss costs and the sum of estimated premiums based on simulated data. In Section~\ref{ap.num}, we provide an empirical illustration of both approaches using a sample of automobile insurance data from a Canadian insurer. Finally, Section~\ref{concl} concludes the paper.

\section{GLM with Tweedie-distributed Response Variable} \label{rap.twee}

The loss cost of a contract is the total amount of claims, calculated as the aggregate sum of individual claim costs. Formally, let $N$ denote the total number of claims for a contract, and let $Z_{k}$ represent the cost of the 
$k$-th claim, where $k, k=1,\ldots,N$ if $N>0$. Thus, assuming $Y$ denotes the loss cost of a contract, as $Y$ is obtained as follows:

\[Y=
\begin{cases} 
	\sum_{k=1}^{N}Z_{k}&\text{if  $N >0$}\\ \\
	0 & \text{if $N=0$.}
\end{cases}
\] 

Even though individual claim costs $Z_{1},\ldots,Z_{N}$ are continuous random variables, we can demonstrate that the loss cost $Y$ is not a continuous random variable. Specifically, the probability of a null loss cost is non-zero, indicating that the loss cost has a mass at zero and is therefore a semi-continuous random variable. As a result, insurers often use the Tweedie distribution to model a contract premium when the loss cost is the response variable. The rationale for using the Tweedie distribution is its ability to represent a mixture distribution (both continuous and discrete), as noted by \cite{tweedie1984index}. Additionally, \cite{delong2021making} provides a comprehensive summary of this mixture distribution, with relevant applications in actuarial science.

Accordingly, we model the premium using the loss cost as the response variable, assuming that the latter follows a Tweedie distribution. This distribution is viewed as a compound Poisson–Gamma mixture and belongs to the linear exponential family, which justifies the use of the Generalized Linear Model (GLM) framework (see Tweedie, 1984; Jørgensen, 1987). Before introducing the GLM approach with the Tweedie distribution, we briefly review its properties by referring to \cite{tweedie1984index}, \cite{jorgensen1987exponential}, and \cite{delong2021making}. It is also important to note that several results presented here can also be found in \cite{wuthrich2023statistical} and \cite{denuit2019effective}; however, this review is essential for a proper understanding of the offset and ratio regression approaches.

\subsection{Exponential Family Form of Tweedie Distribution} \label{intro.1}

\begin{definition} \label{TweedieDef}
The density function of the Tweedie distribution as a mixture distribution is given by: 

	\begin{equation} \label{eq.twee.2}
		f(y|\mu,w,\phi)= \exp\left(\frac{w}{\phi} \left(\frac{\mu^{1-p}}{1-p}y   -  \frac{\mu^{2-p}}{2-p} \right) + a(y,w,\phi) \right), y \in \mathbf{R},
	\end{equation}

where:

\begin{itemize}
        \item $\mu >0$ denotes the mean parameter;
	\item $\phi >0$ denotes the dispersion parameter;
	\item $w>0$ denotes the weight parameter;
	\item $p \in ]1,2[$ denotes the variance parameter; 
        \item $a(\cdot)$ denotes a function that does not depend on $\mu$.
\end{itemize}
\end{definition}

It is important to note that assuming the variance parameter lies within the interval \( ]1, 2[ \) ensures that the Tweedie distribution density is defined as described above. As noted by \cite{tweedie1984index} and \cite{jorgensen1987exponential}, the Tweedie distribution can be generalized for any positive variance parameter \( p > 0 \). However, they also mention that when the variance parameter lies within the interval \( ]0, 1[ \), the Tweedie distribution does not belong to the exponential family. The density function \( f(\cdot) \) mentioned above is expressed in the form of the exponential family density function. In the case of the Tweedie distribution, this family is sometimes referred to as the Exponential Dispersion Family (EDF) due to the inclusion of the dispersion parameter (see \cite{jorgensen1997theory}).

It should also be noted that \( f(\cdot) \) lacks a specific form because the function \( a(\cdot) \) is unknown. To determine \( a(\cdot) \), one can refer to \cite{delong2021making}, which provides a comprehensive overview of the Tweedie distribution. In this paper, we do not focus on determining \( a(\cdot) \) since it does not affect the regression approaches discussed here. Instead, we concentrate on the canonical parameter and the canonical link, which are relevant to insurance pricing. These two quantities for the Tweedie distribution are defined as follows:

\begin{itemize}
	\item \( \theta \) denotes the canonical parameter and is identified by: \( \theta = \frac{\mu^{1 - p}}{1 - p} \);
	\item  The canonical link function is defined by: \( s >0 \mapsto  \frac{s^{1 - p}}{1 - p}\).
\end{itemize}

It is important to note that the canonical link connects the canonical parameter to the mean parameter of distributions within the exponential family. In car insurance pricing, the canonical link function is often employed within the framework of GLM to ensure the balance property, as described by \cite{nelder1972generalized}. The balance property ensures that the sum of the estimated premiums equals the sum of the observed loss costs. The rationale behind this equality is that pricing models must replicate the observed risk at each portfolio level as accurately as possible. Consequently, the primary objective for insurers is to develop models that closely reproduce the observed risk across all portfolio levels.

However, when the response variable follows a Tweedie distribution, interpreting the premium through the canonical link becomes challenging. Therefore, as discussed in the following section, choosing the log link to model the premium instead of the canonical link offers a more straightforward interpretation of the premium. It is important to note, however, that using the log link in premium modelling introduces a gap between the sum of the observed loss costs and the sum of the estimated premiums. This paper will analyze this gap in detail.

For the remainder of this paper, we denote the parameters of the Tweedie distribution by \( (\mu,w,\phi, p) \) and we define \( \mathcal{T} \) as the set of Tweedie distributions \( (\mu, w, \phi, p) \) where the variance parameter \( p \)  lies within the interval \( ]1,2[ \).

\subsection{Notations}\label{rap.twee.2.0}	
We consider \( n \) independent insurance contracts. For each contract \( i \) (where \( i = 1,\ldots,n\)),  the following variables are defined:
\begin{itemize}
    \item \( Y_i \): A random variable representing the loss cost for contract \( i \).
    \item  \( y_i \): The observed loss cost for contract \( i \).
    \item \( t_i \) The risk exposure for contract \( i \).
\end{itemize}

Additionally, we consider \( q \) risk factors, denoted by \( \bm{x_1},\ldots,\bm{x_q} \). For each contract \( i \), the following are also defined:

\begin{itemize}
    \item \( x_{i,j} \): The observed characteristic for contract \( i \) corresponding to risk factor \( \bm{x_j} \), where \( j = 1,\ldots,q\).
    \item  \( \mathbf{X}_i\): The vector of risk characteristics associated with contract \( i \), defined as \( \mathbf{X}_i = \left(x_{i,0}, x_{i,1},\ldots, x_{i,q}\right)^\top\), where \(x_{i,0} = 1, \forall i=1,\ldots,n\).
\end{itemize}

Finally, the matrix \( \bm{X} \), representing the risk factors, is given by:

$$ \bm{X}=\begin{pmatrix} 1& x_{1,1}&\ldots& x_{1,q}\\
                   1& x_{2,1}&\ldots& x_{2,q}\\
                   \vdots& \vdots& \ddots&\vdots\\
                   1& x_{n,1}&\ldots& x_{n,q}
  \end{pmatrix}$$

This matrix \( \bm{X} \) captures all the observed risk characteristics across the contracts.

\subsection{Assumptions and properties}\label{rap.twee.2.1}
The main assumptions of GLM are as follows:

\begin{enumerate}
    \item The distribution of \( Y_{i}| \mathbf{X}_{i}\) belongs to the exponential linear family for all contract \( i \) where \( i=1,\ldots,n \).
    \item  \(  Y_{i_1}| \mathbf{X}_{i_1}\indep  Y_{i_2}|\mathbf{X}_{i_2}, \forall i_1,i_2 = 1,\ldots,n ~ \text{such that}~i_1 \ne i_2\).
     \item \( \bm{X} \) has full rank $q + 1$.
\end{enumerate}

The first assumption in the case of the Tweedie distribution is equivalent to stating that \( Y_{i}| \mathbf{X}_{i}\) follows a Tweedie distribution with parameters \((\mu_i,w_i,\phi, p)\). The second assumption implies the independence of the loss costs between contracts, given the risk characteristics. It is important to note that these first two assumptions are fundamental for constructing the likelihood function. The third assumption is essential for calculating the variance of the estimator of \( \mu_i \). 

The concept of GLM revolves around statistical inference on \( \mu_i \), which is considered the premium for contract \( i \) in insurance pricing. This consideration is justified by the fact that the mean parameter \( \mu_i \), regardless of the distribution of \( Y_i | \mathbf{X}_i \), corresponds to the constant closest to \( Y_i | \mathbf{X}_i \) in terms of Euclidean distance (for further details, see \cite{denuit2007actuarial}). 

Thus, for modelling the premium of contract \( i \), we assume that the logarithm of this premium (\( \mu_i \)) linearly depends on the characteristics vector \( \mathbf{X}_i\) as follows:

\begin{align} \label{rap.twee.eq.1}
    \mu_i &=  \exp\left(\bm{X}^\top_i\bm{\beta}\right)= \exp\left(\beta_0\right) \prod_{j=1}^q \exp\left(\beta_j x_{i,j}\right) = b_0 \times \prod_{j=1}^q r_j,
\end{align}

where:

\begin{itemize}
    \item \( \bm{\beta} \): The parameter vector, defined as \( \bm{\beta} = (\beta_0,\beta_1,\ldots,\beta_q)^\top\) where \(\beta_j \in \mathbf{R}\) for \( j = 0,1,\ldots,q\);
    \item \( b_0 \): The basic premium, defined as  \( b_0 = \exp\left(\beta_0\right)\). 
    \item \( r_j \): The relativity factors associated with each characteristic, defined as \( r_j = \exp\left(\beta_j x_{i,j}\right)\).
\end{itemize}

The advantage of using the logarithm function as the link function is that the premium \( \mu_i \) for contract \( i \) can be expressed as the product of the basic premium \( b_0 \)  and relativity factors \( r_j, j=1, \ldots, q\). While the basic premium \( b_0 \) is consistent across all contracts, the relativity factors depend on the specific risk characteristics of each contract. Consequently, statistical inference on \( \mu_i \) is equivalent to estimating the parameter vector \( \bm{\beta} \). 

To conclude, it is pertinent to introduce an interesting property of the Tweedie distribution, as discussed in \cite{denuit2019effective}, which will be valuable for further developments.

\begin{prop}[Tweedie Invariance Property] \label{rap.twee.prop}
If a random variable \( Z \) is Tweedie-distributed with parameters \( (\mu, w, \phi, p) \) and belongs to the set $\mathcal{T}$, then \( tZ \), for any positive \( t \), is also Tweedie-distributed with parameters \( (t\mu, \frac{w}{t^{2-p}}, \phi, p) \) and belongs to the same set $\mathcal{T}$. Conversely, if \( tZ \), for any positive $t$, is Tweedie distributed with parameters \( (\mu, w, \phi, p) \) and belongs to the set $\mathcal{T}$, then \( Z \) is also Tweedie-distributed with parameters\( (\frac{\mu}{t}, wt^{2-p}, \phi, p) \) and belongs to the same set $\mathcal{T}$. Formally, we have:

$$Z \in \mathcal{T} \Longleftrightarrow t Z \in \mathcal{T}, \forall t >0.$$
\end{prop}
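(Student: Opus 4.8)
The plan is to establish the forward implication by a direct change of variables on the density in Definition \ref{TweedieDef}, and then to obtain the converse for free by applying the forward map with $t$ replaced by $1/t$. Write $Y = tZ$. Because $t>0$, the event $\{Y=0\}$ coincides with $\{Z=0\}$, so the atom at zero is transported with $\mathrm{Pr}(Y=0)=\mathrm{Pr}(Z=0)$, while on the continuous part the one-dimensional change of variables gives $f_Y(y) = \frac{1}{t}\, f_Z\!\left(\frac{y}{t}\,\middle|\,\mu,w,\phi\right)$ for $y\neq 0$, where $f_Z$ is the Tweedie density (\ref{eq.twee.2}).

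First I would substitute the proposed parameters $\mu^\star = t\mu$ and $w^\star = w/t^{2-p}$ into the exponential kernel of (\ref{eq.twee.2}) and check that the two $\mu$-dependent terms are invariant. Using $(t\mu)^{1-p}=t^{1-p}\mu^{1-p}$ and $(t\mu)^{2-p}=t^{2-p}\mu^{2-p}$, the linear term satisfies $\frac{w^\star}{\phi}\frac{(\mu^\star)^{1-p}}{1-p}\,y = \frac{w}{\phi}\frac{\mu^{1-p}}{1-p}\frac{y}{t}$ and the cumulant term satisfies $\frac{w^\star}{\phi}\frac{(\mu^\star)^{2-p}}{2-p} = \frac{w}{\phi}\frac{\mu^{2-p}}{2-p}$; the ratios $t^{1-p}/t^{2-p}=1/t$ and $t^{2-p}/t^{2-p}=1$ are exactly what make the powers of $t$ cancel. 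Hence $f_Y$ already carries the Tweedie kernel of $(\mu^\star, w^\star, \phi, p)$, and what remains is to identify the $\mu$-free part, namely to show $\frac{1}{t}\exp\!\big(a(y/t,w,\phi)\big)=\exp\!\big(a(y,w^\star,\phi)\big)$ on the continuous part and $\exp\!\big(a(0,w,\phi)\big)=\exp\!\big(a(0,w^\star,\phi)\big)$ at the atom.

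This last identification is the main obstacle, since the function $a(\cdot)$ is left unspecified in Definition \ref{TweedieDef} and cannot be matched by direct substitution. I would resolve it by a uniqueness/normalization argument: for fixed $(w^\star,\phi,p)$ the $\mu$-free factor of a Tweedie law is uniquely pinned down by the requirement that the density integrate to one against the appropriate dominating measure (a point mass at zero plus Lebesgue measure on $(0,\infty)$) for every admissible mean, which is an injectivity statement for the Laplace transform of the representing measure in the canonical parameter. Since $f_Y$ is a bona fide density integrating to one and exhibits the correct canonical statistic $y$, scale $w^\star/\phi$, and cumulant term for $(\mu^\star, w^\star, \phi, p)$, its representing measure must coincide with that of the $(\mu^\star,w^\star,\phi,p)$ Tweedie law, which yields both required identities simultaneously. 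Alternatively one can invoke the explicit compound Poisson–gamma series for $a$ from \cite{delong2021making} and verify the functional equation directly, but the normalization argument keeps $a$ implicit as the statement intends.

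Finally, I would note that the converse needs no separate work: applying the forward implication to the variable $tZ$ with scale $1/t$ sends its parameters $(t\mu,\, w/t^{2-p})$ back to $\big(\tfrac{1}{t}(t\mu),\, (w/t^{2-p})\,t^{2-p}\big) = (\mu, w)$, and more generally shows that if $tZ$ is Tweedie with parameters $(\mu,w,\phi,p)$ then $Z = (1/t)(tZ)$ is Tweedie with parameters $(\mu/t,\, wt^{2-p},\, \phi, p)$. Since $\phi$ and $p$ are untouched throughout and $p\in\,]1,2[$ is preserved, membership in $\mathcal{T}$ holds on one side if and only if it holds on the other, establishing the biconditional $Z \in \mathcal{T} \Longleftrightarrow tZ \in \mathcal{T}$ for all $t>0$.
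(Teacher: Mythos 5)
The paper does not actually prove Property \ref{rap.twee.prop}: it is stated as a known result with a citation to \cite{denuit2019effective}, so there is no in-paper argument to compare against. Your derivation is a correct, self-contained proof and is essentially the standard one. The computation showing that the reweighting $w^\star = w/t^{2-p}$ exactly absorbs the powers of $t$ in both the linear term (via $t^{1-p}/t^{2-p}=1/t$, matching the Jacobian-free substitution $y/t$ in the canonical statistic) and the cumulant term (via $t^{2-p}/t^{2-p}=1$, which also preserves the mass at the atom, since $\Prr{Z=0}$ depends on $\mu$ and $w$ only through $\frac{w}{\phi}\frac{\mu^{2-p}}{2-p}$) is exactly right, and you correctly identify the only genuinely delicate step: matching the unspecified $\mu$-free factor $a(\cdot)$. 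Your resolution — that for fixed $(w^\star,\phi,p)$ the base measure of a natural exponential family is uniquely determined by its Laplace transform on an open set of canonical parameters, so a probability density carrying the correct Tweedie kernel for every admissible mean must have the Tweedie $\mu$-free part — is sound and keeps $a(\cdot)$ implicit, consistent with how the paper treats it; the alternative of verifying the functional equation on the explicit compound Poisson--gamma series would work too but is heavier. The converse obtained by applying the forward map with scale $1/t$ is clean, and since $\phi$ and $p\in\,]1,2[$ are untouched, membership in $\mathcal{T}$ is preserved in both directions. No gaps.
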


\subsection{Log-likelihood Function}\label{rap.twee.3}	
The GLM employs the maximum likelihood approach to estimate the parameter vector $\bm{\beta}$ \citep{nelder1972generalized}. The maximum likelihood approach maximizes the likelihood function, which depends on the parameter vector $\bm{\beta}$ given the observations $y_1,\ldots,y_n$. For a detailed discussion on the theory of maximum likelihood, see  \cite{casella2024statistical}. In the case of the Tweedie distribution, by utilizing the conditional independence between contracts and Definition \ref{TweedieDef}, the likelihood function is expressed as follows:

$$L(\beta_0,\ldots,\beta_q) = \prod_{i=1}^n \exp\left(\frac{w_i}{\phi} \left(\frac{\mu^{1-p}}{1-p}_iy_i   -  \frac{\mu^{2-p}}{2-p}_i \right) + a(y_i,w_i,\phi) \right).$$
    
In practice, the maximum likelihood approach often uses the logarithm of the likelihood function as the objective function instead of the likelihood function itself. This is because a function and all strictly increasing transformations of that function reach their maximum at the same point (See \cite{boyd2004convex} for the theory on function optimization). Therefore, in the Tweedie case, the logarithm of the likelihood function (commonly referred to as the log-likelihood function) is given by (\ref{loglik.twe.1}). 

\begin{align} \label{loglik.twe.1}
    \ell(\beta_0,\ldots,\beta_q)&= \sum_{i=1}^n \left(\frac{w_i}{\phi} \left(\frac{\mu^{1-p}_i}{1-p}y_i   -  \frac{\mu^{2-p}_i}{2-p} \right) + a\left(y_i,w_i,\phi,p\right)\right)\\\label{loglik.twe.2}
    &\propto \frac{1}{\phi} \sum_{i=1}^n w_i \left(\frac{\mu^{1-p}_i}{1-p}y_i   -  \frac{\mu^{2-p}_i}{2-p} \right) .
\end{align}

Additionally, considering equation (\ref{loglik.twe.2}), we note that maximizing the log-likelihood in the Tweedie case is equivalent to performing a weighted regression, where each contract \( i \) is weighted by \( w_i \). The log-likelihood maximization can also be interpreted as finding the zeros of the gradient function associated with the model. The gradient function is the vector of the first partial derivatives of the log-likelihood function (\( \frac{\partial \ell(\beta_0,\ldots,\beta_q)}{\partial \beta_j}, j=0,\ldots,q\)). For the Tweedie case, the gradient function is derived as follows:

\begin{align*}
	\nabla(\bm{\beta})&=  \left( \frac{\partial \ell(\beta_0,\ldots,\beta_q)}{\partial\mu_i}\frac{\partial \mu_i}{\partial \beta_0},\ldots,   \frac{\partial \ell(\beta_0,\ldots,\beta_q)}{\partial \mu_i}\frac{\partial \mu_i}{\partial \beta_j}\right)^{T}\\
	&=\frac{1}{\phi}\left( \sum_{i =1}^n w_i\frac{y_i - \mu_i}{\mu^{p -1}_i},\ldots, \sum_{i =1}^n w_i\frac{y_i - \mu_i}{\mu^{p -1}_i}x_{i,q} \right)^\top\\
	&=\frac{1}{\phi} \bm{X}^\top\bm{D}\bm{R};
\end{align*}

The vector \( \bm{R} \) and the matrix \( \bm{D} \) are defined as follows:

\begin{align}
\bm{R}= \left(\frac{y_i}{\mu_i} - 1\right)_{i= 1,...,n} \  
\bm{D}= \diag\left(w_i \mu^{2 - p}_i \right)_{i= 1,...,n} \label{MatrixD}.   
\end{align}

\subsubsection{Iterative Algorithm}\label{rap.twee.4}

Generally, GLM theory adopts the IRLS (Iteratively Reweighted Least Squares) algorithm for maximizing the log-likelihood, which is a variant of the Newton-Raphson algorithm (For a detailed discussion on the Newton-Raphson algorithm, refer to \cite{boyd2004convex}). It is important to note that the IRLS algorithm is based on the Fisher Information Matrix, the theory of which can be found in \cite{casella2024statistical}. In this section, we review the Fisher Information Matrix in the context of the Tweedie distribution and justify its invertibility. We also recall the definition of positive definite matrices and their properties, as discussed by \cite{horn2012matrix}.

Let \( \bm{\beta}_{(k)} \) denote the estimate of  \( \bm{\beta} \) at iteration \( k \), The IRLS algorithm is defined as follows:

\begin{equation} \label{irls.algorithm.1}
    \bm{\beta}_{(k + 1 )} = \bm{\beta}_{(k)} + \left(I_n\left(\bm{\beta}_{(k)}\right)\right)^{-1} \nabla\left({\bm{\beta}_{(k)}}\right);
\end{equation}

\( I_n(\cdot) \) is the Fisher information matrix and is defined as follows:

\begin{align*}
	I_n(\bm{\beta}) &= - \Esp{\text{Hess}\left({\bm{\beta}}\right)} \\
	&=  \frac{1}{\phi}\left(\sum_{i=1}^n w_i \mu^{2 - p}_i x_{i,j_1}x_{i,j_2} \right)_{_{j_1,j_2= 0,\ldots,q}} \\
	&= \frac{1}{\phi} \bm{X}^\top\bm{D}\bm{X},
\end{align*}

where \( \text{Hess}\left({\bm{\beta}}\right) \) is the Hessian matrix. Referring to equation (\ref{irls.algorithm.1}),  we note that the IRLS algorithm relies on the invertibility of the Fisher Information Matrix. The inverse of the Fisher Information Matrix is also used to calculate the asymptotic covariance matrix of the parameter vector estimator. As demonstrated in this paper, comparing the covariance matrices of the parameter vector estimators is crucial for comparing premium estimators in the offset and ratio approaches. Therefore, analyzing the invertibility of the Fisher Information Matrix is of prime importance.

We now review the properties of positive definite matrices and introduce the following notations:

\begin{itemize}
  \item $\mathcal{M}_{n,m}$ denotes the set of matrices with $n$ rows and $m$ columns.
  \item $\mathcal{P}_m$ denotes the set of positive definite matrices.
\end{itemize}

\begin{definition} \label{PosMt.def}
A matrix $M \in \mathcal{M}_{m,m}$ is positive definite ($M\succ 0$), if the following condition holds: 

$$ x^\top M x > 0; \forall x \in \mathbb{R}^m.$$
\end{definition}

\begin{prop} \label{PosMt.pro}
With $\mathcal{M}_{n,m}$ and $\mathcal{P}_m$ defined above, the following relationships can be stated:

\begin{align}\label{PosMt.pro.1}
	A \in \mathcal{P}_m &\Longleftrightarrow A^{-1} \in\mathcal{P}_m;\\\label{PosMt.pro.2}
	\forall A,B \in  \mathcal{P}_m: A - B\succ 0 &\Longleftrightarrow  B^{-1} - A^{-1} \succ 0;\\\label{PosMt.pro.3}
	\forall A \in \mathcal{P}_m, \forall X \in \mathcal{M}_{m,q} : X^\top AX \succ 0 &\Longleftrightarrow X~ \text{has rank} ~q.
	\end{align}
\end{prop}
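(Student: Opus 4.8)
The plan is to treat the three equivalences in order of increasing difficulty, relying throughout on two facts I would record at the outset. First, the matrices in $\mathcal{P}_m$ are taken to be symmetric (as is standard and as holds for every matrix arising in this paper, the Fisher information $\frac{1}{\phi}\bm{X}^t\bm{D}\bm{X}$ included), so each $A \in \mathcal{P}_m$ is invertible and, by the spectral theorem, admits a decomposition $A = Q\Lambda Q^t$ with $Q$ orthogonal and $\Lambda = \diag(\lambda_i)$, $\lambda_i > 0$; in particular it has a symmetric, invertible, positive definite square root $A^{1/2} = Q\Lambda^{1/2}Q^t$. Second, congruence preserves positive definiteness: if $S \succ 0$ and $M$ is any invertible matrix of matching size, then $M^t S M \succ 0$, which is immediate from Definition \ref{PosMt.def} since $x^t(M^tSM)x = (Mx)^t S (Mx) > 0$ whenever $x \neq 0$, because $Mx \neq 0$.

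For (\ref{PosMt.pro.1}), if $A \in \mathcal{P}_m$ then for any $y \neq 0$ I would set $x = A^{-1}y \neq 0$ and compute $y^t A^{-1} y = x^t A^t A^{-1} A x = x^t A x > 0$, using $A^t = A$; hence $A^{-1} \succ 0$. The reverse implication follows by applying this to $A^{-1}$ and using $(A^{-1})^{-1} = A$. For (\ref{PosMt.pro.3}), since $A \succ 0$ I would factor $A = A^{1/2}A^{1/2}$ and write $X^t A X = (A^{1/2}X)^t(A^{1/2}X) = W^t W$ with $W := A^{1/2}X \in \mathcal{M}_{m,q}$. Then for any $y \in \mathbb{R}^q$ one has $y^t X^t A X y = \|Wy\|^2 \geq 0$, with equality iff $Wy = 0$, i.e. iff $Xy = 0$ since $A^{1/2}$ is invertible. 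Thus $X^t A X \succ 0$ is equivalent to $Xy = 0 \Rightarrow y = 0$, which is exactly the statement that $X$ has full column rank $q$; this settles both directions simultaneously.

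For (\ref{PosMt.pro.2}), which I expect to be the main obstacle because it is the genuine order-reversing statement, I would use a simultaneous-reduction argument. Writing $B = B^{1/2}B^{1/2}$, congruence by $B^{-1/2}$ gives $A - B \succ 0 \iff B^{-1/2}(A-B)B^{-1/2} = C - I \succ 0$, where $C := B^{-1/2} A B^{-1/2} \succ 0$. By the spectral theorem, $C - I \succ 0$ is equivalent to every eigenvalue of $C$ exceeding $1$, hence to every eigenvalue of $C^{-1}$ being strictly less than $1$, i.e. to $I - C^{-1} \succ 0$. Since $C^{-1} = B^{1/2} A^{-1} B^{1/2}$, a further congruence by $B^{-1/2}$ converts $I - C^{-1} \succ 0$ into $B^{-1} - A^{-1} \succ 0$. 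Chaining these equivalences yields the claim.

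The delicate point to get right is the passage from $C \succ I$ to $C^{-1} \prec I$ through the eigenvalues of $C$, together with the observation that conjugating $A$ by $B^{-1/2}$ replaces the comparison $A \succ B$ by the cleaner comparison $C \succ I$; once these are in place the rest is bookkeeping with the congruence fact already established. I would also note that (\ref{PosMt.pro.3}) is precisely what guarantees, via full rank of $\bm{X}$, that $\bm{X}^t \bm{D} \bm{X}$ is positive definite and hence invertible in the IRLS step, while (\ref{PosMt.pro.2}) is the tool that will later let a dominance between two Fisher information matrices be transferred to the reverse dominance between the corresponding asymptotic covariance matrices.
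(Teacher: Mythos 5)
Your proof is correct. Note, however, that the paper itself offers no proof of Property \ref{PosMt.pro}: it is quoted as standard linear algebra with a pointer to Horn and Johnson, so there is nothing to compare your argument against. What you give is the textbook route — symmetry plus the spectral theorem to get square roots, congruence invariance of positive definiteness, the factorization $X^tAX=(A^{1/2}X)^t(A^{1/2}X)$ for (\ref{PosMt.pro.3}), and the simultaneous reduction $A-B\succ 0 \iff C-I\succ 0$ with $C=B^{-1/2}AB^{-1/2}$ followed by the eigenvalue-reciprocal step for the order-reversal in (\ref{PosMt.pro.2}) — and all three arguments are sound. Two small points worth flagging: you are right that symmetry must be added to Definition \ref{PosMt.def} for the spectral-theorem steps to apply (the paper's definition omits it, and also quantifies over all $x\in\mathbb{R}^m$ rather than $x\neq 0$, which is a typo); and (\ref{PosMt.pro.3}) implicitly requires $q\le m$, which holds in the paper's application where $X$ is the $n\times(q+1)$ design matrix.
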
		

Property (\ref{PosMt.pro.1}) serves as a necessary and sufficient condition for proving the invertibility of a matrix. Thus, to prove the existence of the inverse of the Fisher Information Matrix, we must show that this matrix is positive definite.  Using the GLM theory assumption regarding the rank of the matrix \( \bm{X} \) and the necessary and sufficient condition given in (\ref{PosMt.pro.3}), we conclude that the Fisher Information Matrix is indeed positive definite. The necessary and sufficient condition given in (\ref{PosMt.pro.2}) is also helpful when comparing premium estimators in the offset and ratio approaches. 

\subsubsection{Remarks}
The variance and weight parameters exclusively influence the estimates of the parameter vector in the IRLS algorithm. However, as stated in equation (\ref{irls.algorithm.2}), the dispersion parameter does not affect these estimates, and therefore, it is not a focus of this paper. Finally, if the weight of each contract is known and the variance parameter is also known, the GLM in the Tweedie case involves using the IRLS algorithm to obtain an estimate of the parameters vector.

\begin{equation} \label{irls.algorithm.2}
  \left( I_n(\bm{\beta})\right)^{-1} \nabla\left(\bm{\beta}\right) =  \left( X^\top DX\right)^{-1} X^\top DR.
\end{equation}

It is also noteworthy that the IRLS algorithm is implemented in the R software through the \textit{glm} function.

\section{The Offset and the Ratio Approaches} \label{ofwe.twee}

To present the offset and ratio regression approaches within the context of a Tweedie distribution and to examine their similarities and differences, we continue to consider \( n \) independent contracts, along with the same notations and assumptions as in the previous section. The primary distinction from the previous section is that, in these two regression approaches, the premium of contract \( i \) is adjusted to account for its risk exposure \( t_i \).

\subsection{Offset Approach} \label{ofwe.twee.1}
As mentioned in Section \ref{introduction.1}, the statistical inference on the parameter vector  \( \bm{\beta} \) in offset approach is performed with the loss costs \( Y_1,\ldots,Y_n \). To account for the logarithm of a contract's risk exposure as the offset variable, we assume that \( Y_i | \mathbf{X}_i \) follows the Tweedie distribution with parameters \( (\mu_i, 1, \phi, p) \) where \( \mu_i \) is defined as $\mu_i = t_i \exp\left(\mathbf{X}^\top_i \bm{\beta} \right)$. We also denote an estimate of \( \mu_i \) in the offset regression by \( \widehat{\mu}^O_i \), obtained as follows:

\begin{equation} \label{ofwe.twee.eq.2}
	\widehat{\mu}^O_i = \exp\left(\mathbf{X}^\top_i \widehat{\bm{\beta}}^O+ \log(t_i) \right),
\end{equation}

where \( \widehat{\bm{\beta}}^O \), an estimate of \( \bm{\beta} \) in offset approach, is obtained using the IRLS algorithm with the following quantities:

\begin{itemize}
    \item \textbf{Log-likelihood Function:}
     \begin{equation} \label{ofwe.twee.eq.3}
         \ell^{O}(\beta_0,\ldots,\beta_q)= \sum_{i=1}^n \left(\frac{1}{\phi} \left(\frac{\mu^{1-p}_i}{1-p}y_i   -  \frac{\mu^{2-p}_i}{2-p} \right) + a^{O}\left(y_i,\phi,p\right)\right).
     \end{equation}
     \item \textbf{Gradient Function:}
     \begin{align} \label{ofwe.twee.eq.4}
\nabla^{O}(\bm{\beta}) &= \frac{1}{\phi}\left( \sum_{i =1}^n \frac{y_i - \mu_i}{\mu^{p - 1}_i},\ldots, \sum_{i =1}^n \frac{y_i - \mu_i}{\mu^{p - 1}_i}x_{i,q} \right)^\top\\\label{ofwe.twee.eq.5}
	&= \frac{1}{\phi} \bm{X}^\top\bm{D}^{O}\bm{R}^{O};
\end{align} 
The vector \( \bm{R}^{O} \) and the matrix \( \bm{D}^{O} \) are defined as follows:

$$\bm{D}^{O}= \diag\left(t^{2 - p}_i  \exp\left((2 - p)\mathbf{X}^\top_i \bm{\beta} \right)   \right)_{i= 1,...,n}, \bm{R}^{O}= \left(\frac{y_i}{\mu_i} - 1\right)_{i= 1,...,n}.$$
\end{itemize}

\subsection{Ratio Approach} \label{ofwe.twee.2}

We saw in Section \ref{introduction.1} that the ratio-based method can be interpreted as a form of weighted regression. In this framework, the response variable is defined as the ratio of the contract's loss cost to its risk exposure, with each contract being assigned a weight proportional to its respective risk exposure. We denote the ratio of contract \( i \)'s  loss cost \( Y_i\)  and its risk exposure \( t_i \) by \( Z_i \), defined as \( Z_i = \frac{Y_i}{t_i}\), for \(i=1,\ldots,n\). To account for the risk exposure \( t_i \) as the weight of contract \( i \), we also assume \( Z_i| \mathbf{X}_i \) follows a Tweedie distribution with parameters \( (\zeta_i, t_i, \phi, p) \), where the mean parameter \( \zeta_i \) is assumed to be:

\begin{equation} \label{ofwe.twee.eq.6}
\zeta_i = \exp\left(\mathbf{X}^\top_i \bm{\beta} \right), 
\end{equation}

and \( \widehat{\zeta_i}  \) is an estimate of \( \zeta_i \), equal to \( \widehat{\zeta_i}  = \exp\left(\mathbf{X}^\top_i \widehat{\bm{\beta}}^R \right)\). The vector \( \widehat{\bm{\beta}}^R \), an estimate of \( \bm{\beta} \) in ratio approach, is obtained using the IRLS algorithm with the following quantities:

\begin{itemize}
    \item \textbf{Log-likelihood Function:} This function is constructed with the observations \( z_1,\ldots,z_n  \) where \( \zeta_i = \frac{y_i}{t_i}, i=1,\ldots,n \), as follows:
    
     \begin{equation} \label{ofwe.twee.eq.8}
         \ell^R(\beta_0,\ldots,\beta_q)= \sum_{i=1}^n \left\{\frac{t_i}{\phi} \left(\frac{\zeta^{1-p}_i}{1-p}z_i   -  \frac{\zeta^{2-p}_i}{2-p} \right) + a^R\left(z_i,\phi,p\right)\right\}.
     \end{equation}
     \item \textbf{Gradient Function:}
     
     \begin{align} \label{ofwe.twee.eq.9}
     \nabla^R(\bm{\beta}) &= \frac{1}{\phi}\left( \sum_{i =1}^n t_i\frac{z_i - \zeta_i}{\zeta^{p - 1}_i},\ldots, \sum_{i =1}^n t_i\frac{z_i  - \zeta_i}{\zeta^{p - 1}_i}x_{i,q} \right)^\top
	\\\label{ofwe.twee.eq.10}
	&= \frac{1}{\phi} \bm{X}^\top\bm{D}^R\bm{R}^R;
\end{align} 

The vector \( \bm{R}^R \) and the matrix \( \bm{D}^R \) are defined as follows:

$$\bm{D}^R= \diag\left(t_i \exp\left\{ (2 - p) \mathbf{X}^\top_i \bm{\beta} \right\}  \right)_{i= 1,...,n}, \bm{R}^R= \left(\frac{z_i}{\zeta_i} - 1\right) = \left(\frac{y_i}{\mu_i} - 1\right)_{i= 1,...,n},$$
where $\mu_i = t_i \zeta_i = t_i \exp\left(\mathbf{X}^\top_i \bm{\beta} \right).$
\end{itemize}

\subsection{Link between Offset and Ratio Approaches}\label{ofwe.twee.3}

Building on the Tweedie Invariance Property defined in Proposition \ref{rap.twee.prop}, the equivalence of the two approaches can be demonstrated as follows:

\begin{enumerate}
\item \textbf{The offset approach viewed in the form of the ratio approach}

In the offset approach, \(Y_i|\mathbf{X}^\top_i\) follows a Tweedie distribution with parameters \( (t_i \exp\left( \mathbf{X}^\top_i \bm{\beta} \right), 1, \phi, p) \). We can examine the ratio of the loss cost to the risk exposure \(t_i\), denoted as \(Z_i = \frac{Y_i}{t_i}\), and apply the Tweedie Invariance Property. It can then be demonstrated that \(Z_i|\mathbf{X}^\top_i\) follows a Tweedie distribution with parameters \( (\exp\left( \mathbf{X}^\top_i \bm{\beta} \right), t_i^{2 - p}, \phi, p) \). More formally, we have the following equivalence:

\[
Y|\mathbf{X}^\top_i \sim \text{Tweedie}(t_i \exp\left( \mathbf{X}^\top_i \bm{\beta} \right), 1, \phi, p) 
\equiv 
Z|\mathbf{X}^\top_i \sim \text{Tweedie}(\exp\left( \mathbf{X}^\top_i \bm{\beta} \right), t_i^{2-p}, \phi, p).
\]

Using equation (\ref{MatrixD}) for the calculation of the matrix \( \bm{D} \), we can demonstrate that the two Tweedie distributions introduced earlier share the same expression:

$$\bm{D} = \diag\left( t^{2 - p}_i \exp\left( (2 - p) \mathbf{X}^\top_i \bm{\beta} \right) \right)_{i= 1,\ldots,n} \equiv \bm{D}^O$$

This result provides an additional verification that these two regression approaches are equivalent, leading to identical inference outcomes for the parameter vector \( \bm{\beta} \).\\

\item \textbf{The ratio approach viewed in the form of the offset approach}

Similarly, in the ratio approach, where \( Z_i|\mathbf{X}^\top_i \) is assumed to follow a Tweedie distribution with parameters \( (\exp\left( \mathbf{X}^\top_i \bm{\beta} \right), t_i, \phi, p) \), we can apply the Tweedie Invariance Property to demonstrate that \( Y_i|\mathbf{X}^\top_i \) follows a Tweedie distribution with parameters \( (t_i \exp\left( \mathbf{X}^\top_i \bm{\beta} \right), t_i^{p - 1}, \phi, p) \). This equivalence is expressed as:

\[
Z |\mathbf{X}^\top_i\sim \text{Tweedie}(\exp\left( \mathbf{X}^\top_i \bm{\beta} \right), t_i, \phi, p) 
\equiv 
Y|\mathbf{X}^\top_i \sim \text{Tweedie}(t_i \exp\left( \mathbf{X}^\top_i \bm{\beta} \right), t_i^{p-1}, \phi, p).
\]

For each form of the Tweedie distributions, we also have the following equivalence for the matrix \( \bm{D} \):

\[
\bm{D}= \diag\left( t_i \exp\left( (2 - p) \mathbf{X}^\top_i \bm{\beta} \right) \right)_{i=1,\ldots,n} \equiv \bm{D}^R.
\]

\end{enumerate}

The previous results facilitate a direct comparison of the two approaches. By focusing on the random variable \( Z \), the choice is between the following two distributions:

\[
\text{Tweedie}(\exp\left( \mathbf{X}^\top_i \bm{\beta} \right), t_i^{2-p}, \phi, p) 
\ \ \ \text{\textit{versus}} \ \ \ 
\text{Tweedie}(\exp\left( \mathbf{X}^\top_i \bm{\beta} \right), t_i, \phi, p)
\]

We observe that the choice of approach depends solely on the form of the weights $w_i$, as all other parameters remain unchanged. In other words, selecting between an offset or a ratio approach implies working with normalized loss costs, i.e., the random variable \( Z_i = \frac{Y_i}{t_i} \), and deciding, for observation \(i\), whether to apply a weight of \( w_i^O = t_i^{2-p} \) or a weight of \( w_i^R = t_i \).

\subsection{Weights Analysis}

Since the distinction between the two approaches primarily hinges on the choice of the weight parameter \(w_i\), it is crucial to analyze its impact on modelling loss costs with the Tweedie distribution in greater detail. An initial evident result is the dominance of the weight in the ratio approach compared to that in the offset approach, given by:

$$w_i^O = t_i^{2-p} \ge t_i = w_i^R, \quad \forall p \in ]1, 2[, \, t \in ]0, 1].$$

In Figure \ref{weightsgraph}, we analyze in more detail the differences between the weights used in the offset and ratio approaches. The diagonal dashed line represents the weights in the ratio approach \((w_i^R = t_i)\), while the other coloured lines illustrate the weights \(t_i^{2-p}\) employed in the offset approach for various values of the variance parameter \(p\). 

\begin{figure}[H]
	\begin{center}
		\includegraphics[scale=0.50]{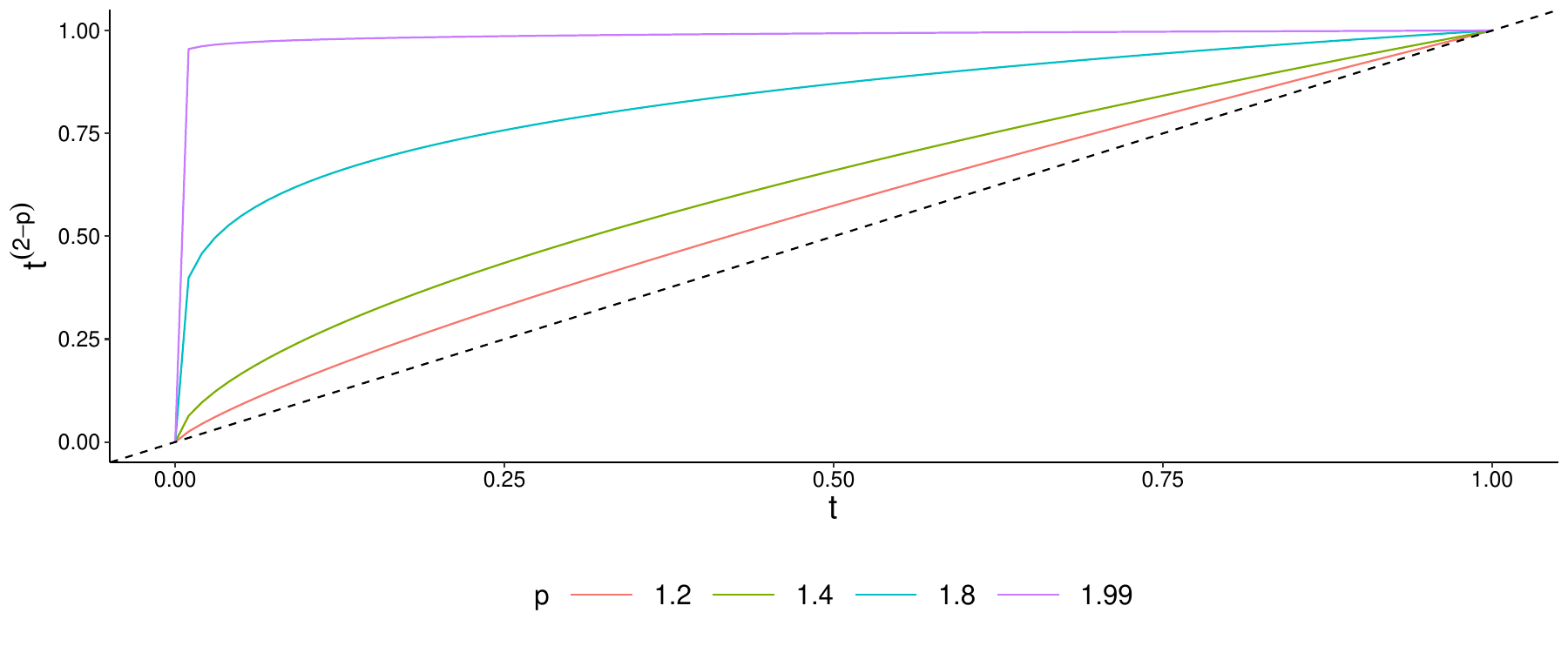}
		\caption{Comparison of the Weight Parameters in the Ratio and Offset Approaches for Different Risk Exposures ($t$)}
		\label{weightsgraph}
	\end{center}
\end{figure}

For a given value of \(p\), we observe that the difference in weights between the offset and ratio approaches diminishes as the risk exposure increases. This indicates that when the contract exposure period is short, the disparity between the weights in both approaches is significant compared to a contract with an exposure period close to one year. In the case where the exposure period equals one year \((t = 1)\), the weights in each approach equal 1, which means that these two regression approaches are equivalent. This equivalence between the methods is also evident in the equality between \(\bm{D}^{R}\) and \(\bm{D}^{O}\) when \(t = 1\):

$$ \bm{D}^{R} = \bm{D}^{O} = \diag\left(\exp\left((2 - p)\mathbf{X}^\top_i \bm{\beta} \right)   
\right)_{i= 1,...,n}.$$

We can also interpret Figure \ref{weightsgraph} by varying the variance parameter \(p\). As \(p\) decreases and approaches 1, the difference between the weights in the offset and ratio approaches becomes negligible. Since the Tweedie distribution converges to the Poisson distribution as \(p \rightarrow 1\), it is noteworthy that this observation aligns with the equivalence of the methods, corresponding to the result obtained regarding claim numbers in Example \ref{exPoisson} presented at the beginning of the paper. Conversely, as \(p\) approaches 2, the disparity in weights between the two approaches becomes more pronounced, indicating that the choice of approach has a more significant impact. Notably, when \(p \rightarrow 2\), the Tweedie distribution converges to a gamma distribution, which has historically been employed to model the severity of claims in actuarial science (see \cite{denuit2007actuarial}). In the context of severity modelling, it is well known that the risk exposure \(t_i\) should not be utilized, implying that the weight should remain constant regardless of the value of the risk exposure \(t_i\). The ratio approach with \(p = 1.999\) tends to demonstrate this characteristic. Thus, overall, these results suggest a preference for the ratio approach, which appears appropriate for any value of \(p \in ]1, 2[\).

\section{Comparison of the Approaches} \label{balance.Twee}

According to the analysis presented in Section~\ref{ofwe.twee.3}, for a fixed value of the variance parameter \(p\), the difference between the weights \(w_i^O\) and \(w_i^R\) associated with the offset and ratio approaches, respectively, depends on the risk exposure \(t_i\). This difference is substantial for contracts with short exposure periods, while it becomes negligible for contracts with exposure periods close to one year. However, a limitation of this analysis is that it does not, by itself, allow us to determine which approach is more appropriate for insurance pricing.

To address this issue, we proceed in two steps. First, we compare the parameter vector estimators derived under the two approaches, relying on the quasi-maximum likelihood framework of \cite{white1982maximum}. Second, using a numerical illustration, we highlight the practical merits of the ratio approach relative to the offset approach in the context of automobile insurance pricing.  To this end, we consider as the response variable the loss cost \(Y\) per unit of risk exposure \(t\), namely $Z = \frac{Y}{t}$, and assume that the true data-generating process of \(Z\) is characterized by a density function \(g\).

For all \(\bm{x} \in \mathbb{R}^{q+1}\) and \(t \in (0,1]\), representing the covariate vector (risk characteristics) and the risk exposure, respectively, the true expected value of \(Z\) is given by

\[
\zeta^{\text{True}} = \int_0^{+\infty} z \, g\!\left(z \mid \bm{\beta}^{\text{True}}, t \right) \, dz = \exp\!\left\{ \bm{x}^\top \bm{\beta}^{\text{True}} \right\}< +\infty,
\]

where \(\bm{\beta}^{\text{True}} \in \mathbb{R}^{q+1}\) denotes the true parameter vector.To clarify our strategy, we also recall the definition of estimator consistency, following \cite{casella2024statistical}. An estimator \(\widehat{\bm{\beta}}\) of \(\bm{\beta}^{\text{True}}\) is said to be consistent (or convergent in probability) if

\[
\widehat{\bm{\beta}} \xrightarrow[]{\mathbb{P}} \bm{\beta}^{\text{True}}
\quad \text{as } n \to +\infty,
\]

that is, if the following condition holds:

\[
\lim_{n \to \infty}
\Pr\!\left( \| \widehat{\bm{\beta}} - \bm{\beta}^{\text{True}} \| > \epsilon \right)
= 0,
\quad \text{for all } \epsilon > 0.
\]

\subsection{Parameter Vector Estimators Comparison} \label{comp.1}

Recall that the estimation of the true parameter vector $\bm{\beta}^{\text{True}}$ under the two approaches is carried out by assuming a Tweedie distribution with different weights in each case:  
the offset approach uses $w = t^{2-p}$, whereas the ratio approach uses $w = t$.  
We denote the corresponding density by $f$, which is defined as

\[
f(z \mid \bm{\beta}, t)
=
\exp\!\left(
\frac{w}{\phi}
\left(
\frac{\exp\!\{(1 - p)\bm{x}^\top \bm{\beta}\}}{1 - p}\, z
-
\frac{\exp\!\{(2 - p)\bm{x}^\top \bm{\beta}\}}{2 - p}
\right)
+ a(z, w, \phi)
\right),
\qquad z \ge 0, \qquad \bm{\beta} \in \mathbb{R}^{q+1}.
\]

We then define the Kullback--Leibler (KL) divergence between the true density $g(\cdot)$ and the density $f(\cdot)$ as

\begin{eqnarray*}
KL(\bm{\beta})
&=&
\mathbb{E}\!\left[
\log \frac{g(z \mid \bm{\beta}^{\text{True}}, t)}
{f(z \mid \bm{\beta}, t)}
\right] \\[0.3em]
&=&
\int_0^{+\infty}
\log\!\big(g(z \mid \bm{\beta}^{\text{True}}, t)\big)\,
g(z \mid \bm{\beta}^{\text{True}}, t)\, dz \\
&&\quad
- \int_0^{+\infty}
\log\!\big(f(z \mid \bm{\beta}, t)\big)\,
g(z \mid \bm{\beta}^{\text{True}}, t)\, dz .
\end{eqnarray*}

\newpage
\begin{proposition} \label{newprop}
According to \cite{white1982maximum}, if the Kullback--Leibler divergence between the true density $g$ and the density $f$ (assumed under either the offset or the ratio approach) exists, then the corresponding estimator of the true parameter vector is consistent under both approaches. More formally, if

\[
KL(\bm{\beta}) < +\infty 
\quad \text{for all } \bm{\beta},
\]

then

\[
\widehat{\bm{\beta}}^{O} \xrightarrow[]{\mathbb{P}} \bm{\beta}^{\text{True}}, \quad \widehat{\bm{\beta}}^{R} \xrightarrow[]{\mathbb{P}} \bm{\beta}^{\text{True}} 
\quad \text{as } n \to +\infty,
\]

where $\bm{\widehat{\beta}}^{O}$ and $\bm{\widehat{\beta}}^{R}$ denote the estimators of 
$\bm{\beta}^{\text{True}}$ obtained from the offset and ratio approaches, respectively.
\end{proposition}

\begin{proof}

The function $KL(\cdot)$ is always nonnegative and finite whenever the probability
measures induced by $g(\cdot \mid \bm{\beta}^{\text{True}}, t)$ and
$f(\cdot \mid \bm{\beta}, t)$ are mutually absolutely continuous (see
\cite{kullback}). Since both $g(\cdot \mid \bm{\beta}^{\text{True}}, t)$ and
$f(\cdot \mid \bm{\beta}, t)$ are densities, this condition is equivalent to
requiring that their associated probability measures be mutually absolutely
continuous, which follows from the Radon--Nikodym theorem.

Suppose that the minimum of $KL(\bm{\beta})$ exists and denote it by
$\bm{\beta}^*$. According to \cite{white1982maximum}, the estimators obtained
under the working density $f$ converge in probability to this minimizer
$\bm{\beta}^*$. Therefore, it suffices to show that this minimizer coincides with
the true parameter vector, that is,
\[
\bm{\beta}^* = \bm{\beta}^{\text{True}}.
\]

We start from the following equivalence:
\[
\arg\min_{\bm{\beta}} KL(\bm{\beta})
=
\arg\max_{\bm{\beta}}
\int_0^{+\infty}
\log\!\big(f(z \mid \bm{\beta}, t)\big)
\, g(z \mid \bm{\beta}^{\text{True}}, t)
\, dz.
\]

For notational convenience, we write
\[
\int_0^{+\infty}
\log\!\big(f(z \mid \bm{\beta}, t)\big)
\, g(z \mid \bm{\beta}^{\text{True}}, t)
\, dz
\;=\; \int_0^{+\infty} \log(f)\, g \, dz.
\]

Using the expression of the Tweedie density, we obtain
\begin{align*}
\int_0^{+\infty} \log(f)\, g \, dz
&= \int_0^{+\infty}
\Bigg[
\frac{w}{\phi}
\Bigg(
\frac{\exp\!\{(1-p)\bm{x}^\top \bm{\beta}\}}{1-p} \, z
-
\frac{\exp\!\{(2-p)\bm{x}^\top \bm{\beta}\}}{2-p}
\Bigg)
+ a(z,w,\phi)
\Bigg]
g(z \mid \bm{\beta}^{\text{True}}, t)
\, dz \\
&\propto
\int_0^{+\infty}
\Bigg(
\frac{\exp\!\{(1-p)\bm{x}^\top \bm{\beta}\}}{1-p} \, z
-
\frac{\exp\!\{(2-p)\bm{x}^\top \bm{\beta}\}}{2-p}
\Bigg)
g(z \mid \bm{\beta}^{\text{True}}, t)
\, dz,
\end{align*}
where the proportionality follows from the fact that $a(z,w,\phi)$ does not
depend on $\bm{\beta}$.

By linearity of the integral, this expression becomes
\begin{align*}
\frac{\exp\!\{(1-p)\bm{x}^\top \bm{\beta}\}}{1-p}
\int_0^{+\infty} z \, g(z \mid \bm{\beta}^{\text{True}}, t) \, dz
-
\frac{\exp\!\{(2-p)\bm{x}^\top \bm{\beta}\}}{2-p}
\int_0^{+\infty} g(z \mid \bm{\beta}^{\text{True}}, t) \, dz.
\end{align*}

Since
\[
\int_0^{+\infty} z \, g(z \mid \bm{\beta}^{\text{True}}, t) \, dz
=
\zeta^{\text{True}}
=
\exp\!\left\{\bm{x}^\top \bm{\beta}^{\text{True}}\right\},
\qquad
\int_0^{+\infty} g(z \mid \bm{\beta}^{\text{True}}, t) \, dz = 1,
\]
the objective function can be written as
\[
K(\bm{\beta})
=
\frac{\exp\!\{(1-p)\bm{x}^\top \bm{\beta} + \bm{x}^\top \bm{\beta}^{\text{True}}\}}{1-p}
-
\frac{\exp\!\{(2-p)\bm{x}^\top \bm{\beta}\}}{2-p}.
\]

The gradient of $K(\bm{\beta})$ with respect to $\bm{\beta}$ is
\[
\nabla_{\bm{\beta}} K(\bm{\beta})
=
\Big(
\exp\!\{(1-p)\bm{x}^\top \bm{\beta} + \bm{x}^\top \bm{\beta}^{\text{True}}\}
-
\exp\!\{(2-p)\bm{x}^\top \bm{\beta}\}
\Big)\bm{x}.
\]

At the minimizer $\bm{\beta}^*$, the first-order condition yields
\[
\nabla_{\bm{\beta}} K(\bm{\beta}^*) = \bm{0}
\quad \Longleftrightarrow \quad
\bm{\beta}^* = \bm{\beta}^{\text{True}},
\]
which completes the proof.
\end{proof}

\subsection{Asymptotic Variance Approximation}

Another important result established by \cite{white1982maximum} is that, under both the offset and ratio approaches, the estimators of the true parameter vector \(\bm{\beta}^{\text{True}}\) are asymptotically Gaussian with a common mean equal to \(\bm{\beta}^{\text{True}}\). Once consistency is established, differences between the two approaches therefore arise solely through their asymptotic dispersion rather than through their limiting value.

From a practical standpoint, differences in asymptotic dispersion translate into differences in the precision and stability of the estimated covariate effects. A smaller asymptotic covariance matrix implies tighter confidence
intervals for the regression coefficients, which facilitates the comparison, selection, and validation of pricing variables, particularly when some effects
are modest but economically relevant. At the same time, reduced dispersion limits the sensitivity of the estimated parameters to random portfolio fluctuations.

In this context, the comparison of asymptotic covariance matrices provides a natural and meaningful criterion for assessing the relative efficiency of the two estimation strategies. Using the asymptotic expansion derived in
Appendix~\ref{appendix2}, we may therefore state the following proposition.

\begin{proposition}\label{proposition.00}
Consider the leading Gaussian terms of the Edgeworth expansion of the asymptotic covariance matrices of 
$\widehat{\bm{\beta}}^{O}$ and $\widehat{\bm{\beta}}^{R}$ associated with the offset and ratio approaches. 
Assume that these leading terms are given by
\begin{align}
\Sigma^{O}&=\phi(\bm{X}^\top\bm{D}^{O}\bm{X})^{-1},\label{asympt.Cov.O} \\
\Sigma^{R}&=\phi(\bm{X}^\top\bm{D}^{R}\bm{X})^{-1}.\label{asympt.Cov.R}
\end{align}
Under the assumptions of Proposition~\ref{newprop} and for risk exposure satisfying \(t\le1\), these leading Gaussian terms satisfy
\[
\Sigma^{R}-\Sigma^{O}\succ0.
\]

\end{proposition}

\begin{proof}
First, consider $n$ observations of the loss cost per unit of exposure $Z_1,\ldots,Z_n$, with $Z_i\sim\twee(\zeta_i,w_i,\phi,p)$, $w_i\in\{t_i,t_i^{2-p}\}$, and $\zeta_i=\exp\{\bm{X}_i^\top\bm{\beta}\}$. According to \cite{white1982maximum}, the asymptotic covariance matrix of the estimator is given by $\Sigma=A^{-1}BA^{-1}$, where $A=-I_n(\bm{\beta})$ and
$B=\mathbb{E}[\nabla(\bm{\beta})\nabla(\bm{\beta})^\top]$, with

\[
\nabla(\bm{\beta})=\frac{1}{\phi}\Big(\sum_{i=1}^nw_i\frac{Z_i-\zeta_i}{\zeta_i^{p-1}}x_{i,0},\ldots,
\sum_{i=1}^nw_i\frac{Z_i-\zeta_i}{\zeta_i^{p-1}}x_{i,q}\Big)^\top.
\]

As shown in Appendix~\ref{appendix2}, $B=\phi^{-1}\bm{X}^\top\bm{D}\bm{X}=-A$, yielding

\[
\Sigma=A^{-1}BA^{-1}=-A^{-1}=\phi(\bm{X}^\top\bm{D}\bm{X})^{-1}.
\]

We now compare the two covariance matrices. As noted in Section~\ref{rap.twee.4}, $\Sigma^{O},\Sigma^{R}\in\mathcal{P}_{q+1}$. Define
$M=\Sigma^{R}-\Sigma^{O}$. To prove that $M$ is positive definite, it suffices, by the results of Equation~(\ref{PosMt.pro.2}), to show that $\bm{D}^{O}-\bm{D}^{R}$ is positive definite. From Figure~\ref{weightsgraph}, we have

\[
\forall v\in\mathbb{R}^n,\quad
v^\top(\bm{D}^{O}-\bm{D}^{R})v
=\sum_{i=1}^n(t_i^{2-p}-t_i)\zeta_i^{2-p}v_i^2>0.
\]

Moreover, since $\bm{X}$ has full rank $q+1$, equation~(\ref{PosMt.pro.3}) implies

\[
\bm{X}^\top(\bm{D}^{O}-\bm{D}^{R})\bm{X}
=\bm{X}^\top\bm{D}^{O}\bm{X}-\bm{X}^\top\bm{D}^{R}\bm{X}\succ0,
\]

It then follows from Equation~\eqref{PosMt.pro.2} that
\( M = \Sigma^{R} - \Sigma^{O} \) is positive definite.

\end{proof}

Taken together, the previous propositions show that each approach can be implemented independently and yields a consistent (i.e., convergent) estimator of the true parameter vector $\bm{\beta}^{\text{True}}$.  However, based solely on the result stated in Proposition~\ref{proposition.00}, one might be led to conclude that the estimator obtained under the offset approach is asymptotically more efficient than the one obtained under the ratio approach.

It is important to emphasize that Proposition~\ref{proposition.00} is formulated exclusively at the level of the leading Gaussian term of the Edgeworth expansion and therefore concerns only first-order asymptotic quantities. More specifically, the matrices $\Sigma^{O}$ and $\Sigma^{R}$ represent the leading $O(1/n)$ terms in the covariance expansions of $\widehat{\bm{\beta}}^{O}$ and $\widehat{\bm{\beta}}^{R}$. 
Consequently, the difference $\Sigma^{R}-\Sigma^{O}$ identified in Proposition~\ref{proposition.00} is itself a first-order quantity of order $O(1/n)$. The apparent superiority of the offset approach therefore relies on a comparison between leading terms only and is meaningful insofar as the full asymptotic covariance matrices of $\widehat{\bm{\beta}}^{O}$ and $\widehat{\bm{\beta}}^{R}$ are sufficiently well approximated by their leading Gaussian components $\Sigma^{O}$ and $\Sigma^{R}$, as defined in equations~(\ref{asympt.Cov.O}) and~(\ref{asympt.Cov.R}). 
Since higher-order contributions in the Edgeworth expansion are neglected, the resulting approximation error is of smaller order asymptotically, but may nevertheless be of the same order of magnitude as the first-order difference $\Sigma^{R}-\Sigma^{O}$ in finite samples.  As a result, it is likely inappropriate to draw definitive conclusions regarding the relative performance of the two approaches based solely on Proposition~\ref{proposition.00}. 
A natural extension would therefore consist in deriving and comparing higher-order terms in the Edgeworth expansion. 
While such an analysis would be of clear theoretical interest, it lies beyond the scope of the present paper.

\subsubsection{Simulations Study}  
Instead, we focus on assessing the accuracy of the asymptotic approximation and on comparing the two methods through a simulation study. The simulation study is designed solely to assess the empirical relevance of Proposition \ref{proposition.00}, which provides a first-order asymptotic comparison between the offset and ratio approaches based on the leading Gaussian term of the Edgeworth expansion of the estimators’ covariance matrices. To this end, independent samples are generated under a data-generating process that is fully consistent with the modeling assumptions underlying the theoretical result. For each scenario, defined by a combination of sample size
\(n\) and Tweedie power parameter \(p\), repeated Monte Carlo replications are performed, and both the offset and ratio estimators are computed using the same Tweedie specification as that used for data generation.

For each approach, the resulting collection of parameter estimates is used to construct empirical covariance matrices, which provide finite-sample
approximations of the true sampling covariances of the estimators. The empirical counterpart of the theoretical comparison is then obtained by examining the
difference between these covariance matrices and by evaluating its spectral properties, in particular through the sign of its smallest eigenvalue. Across all scenarios considered, the empirical results are consistent with the
ordering predicted by Proposition \ref{proposition.00}. In particular, the empirical difference \(\widehat{\Sigma}^{R} - \widehat{\Sigma}^{O}\) is found to be positive definite in all tested configurations.

From the standpoint of parameter estimation accuracy, the offset approach may be
preferable to the ratio approach in finite samples, as it consistently yields
smaller empirical variances for the estimator of \(\beta\). This empirical
finding is in line with the asymptotic ordering established in Proposition
\ref{proposition.00}.

\subsection{Sum of Individual Observed Gaps Comparison}\label{eqfinance}

In the context of automobile insurance pricing, one of the primary objectives is not only parameter estimation but also the accurate determination of premiums. 
From this perspective, we instead focus on proximity to financial balance as a key criterion for evaluating the different approaches. 
As discussed in \cite{denuit2024testing}, financial balance broadly refers to the situation in which the sum of observed total losses coincides with the sum of premiums computed under a given pricing method. 
Since this balance is not exactly achieved under either approach due to the use of a logarithmic link function in premium modeling, financial balance naturally emerges as an important criterion to consider when selecting between the two approaches.

We consider $n$ independent contracts with observed loss costs $y_i=t_i z_i$ for $i=1,\ldots,n$, where $z_i$ denotes the loss cost per unit of risk exposure $t_i$. We define the \textit{individual observed gap} as the difference between the observed loss cost and the estimated premium. For contract $i$, these gaps are denoted by $\Delta_i^{O}$ and $\Delta_i^{R}$ under the offset and ratio approaches, respectively:

\[
\Delta_i^{O}=t_i\bigl(z_i-\widehat{\zeta}_i^{O}\bigr),
\qquad
\Delta_i^{R}=t_i\bigl(z_i-\widehat{\zeta}_i^{R}\bigr),
\]

where $\widehat{\zeta}_i^{O}$ and $\widehat{\zeta}_i^{R}$ are the estimated premiums obtained from the offset and ratio approaches. These quantities can be interpreted as realizations of the corresponding premium estimators. Our
quantities of interest are therefore the aggregated gaps

\begin{equation}
\sum_{i=1}^n \Delta_i^{O}
\quad \text{and} \quad
\sum_{i=1}^n \Delta_i^{R}.
\label{eqdiffgaps}
\end{equation}

Although these sums are generally not equal to zero—reflecting the use of a logarithmic link function rather than the canonical link of the Tweedie model in premium modeling—we show that, under the ratio approach, the sum of individual observed gaps tends to be closer to zero than under the offset approach.

To analyze this criterion, we rely on the same simulation framework as in the previous section and examine, for each replication with $p=1.5$, the ratio between the total predicted premiums and the total observed losses under both approaches. This comparison is illustrated in Figure~\ref{RatioMonteCarlo}. Across all considered sample sizes, the distribution associated with the ratio approach is markedly more concentrated around the value 1—corresponding to a situation close to financial balance—than the distribution obtained under the offset approach. In contrast, the offset approach exhibits substantially greater dispersion, reflecting higher variability around the global balance condition, even as the sample size increases.

A complementary perspective consists in identifying, for each simulation, which approach yields a total predicted premium closer to the observed total losses. According to this criterion, and for all scenarios under consideration, the ratio approach is closer to financial balance in approximately 87\% of the replications. This result further supports the view that, from the standpoint of aggregate financial balance, the ratio approach displays a more stable and robust behavior than the offset approach in finite samples.

\begin{figure}[H]
	\begin{center}
		\includegraphics[scale=0.45]{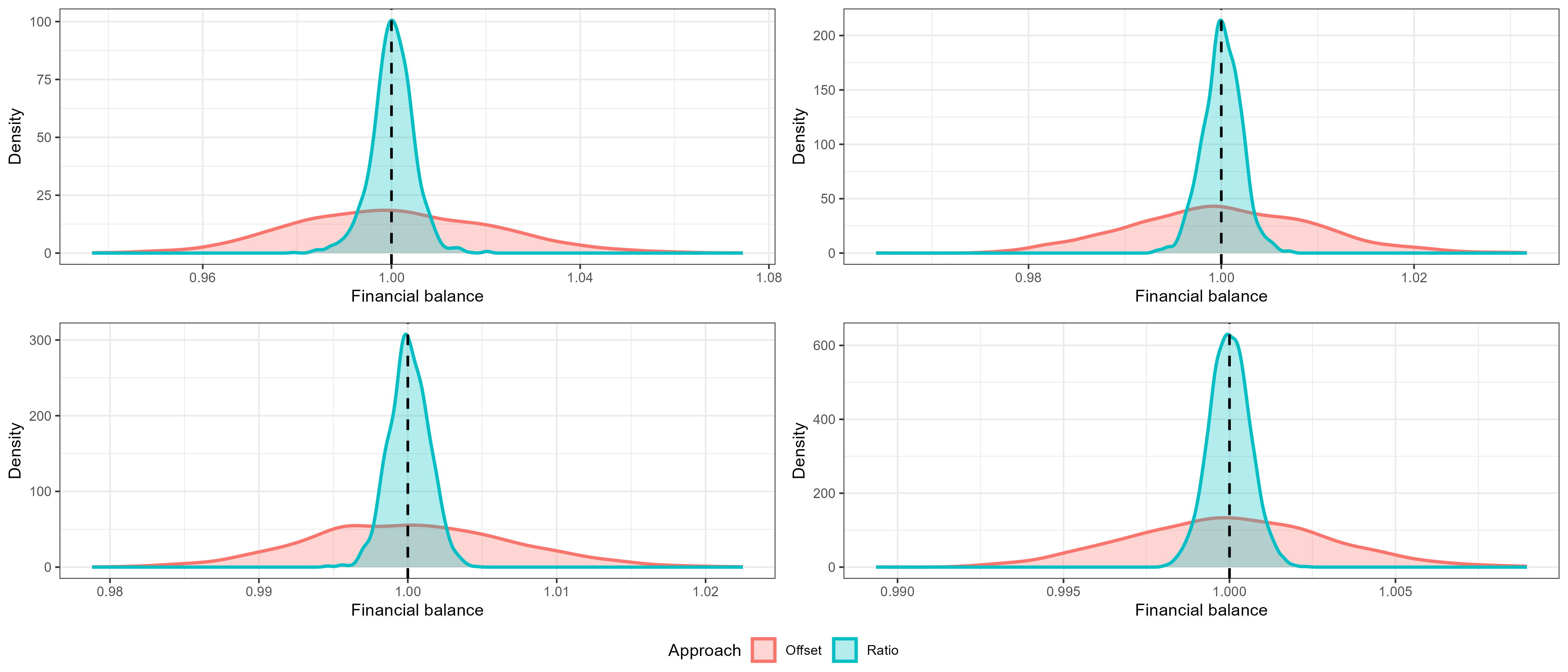}
		\caption{Financial balance density for $n = 1{,}000$, $5{,}000$, $20{,}000$, and $50{,}000$ under the offset and ratio approaches.}
		\label{RatioMonteCarlo}
	\end{center}
\end{figure}

Nevertheless, rather than relying solely on simulation results, we argue that additional insight into this financial balance criterion can be gained by examining the parameter estimation equations themselves. To this end, we first consider a homogeneous setting in which the sum of individual observed gaps is exactly zero under the ratio approach. We then extend the analysis to a heterogeneous setting in order to compare the behavior of this aggregate gap under the ratio and offset approaches.

\subsubsection{Homogeneous Portfolio} \label{scorehyp}

First, we will assume the analysis of a homogeneous portfolio for which no explanatory variable is used in the mean parameter of the two approaches under study. More formally, we assume that the loss costs \( Z_1, \ldots, Z_n \) are independent and have a common mean parameter \( \zeta \). We also assume that \( \zeta_i = \zeta \) for all contracts \( i = 1, \ldots, n \). This allows us to express the log-likelihood function for \( \zeta \) as follows:

\begin{align}
    \ell(\zeta) = \frac{1}{\phi} \left( \frac{\zeta^{1 - p}}{1 - p} \sum_{i=1}^n w_iz_i  -  \frac{\zeta^{2 - p}}{2 - p} \sum_{i=1}^n w_i\right) + \sum_{i=1}^n a\left(z_i,w_i,\phi,p\right).
\end{align}

The first derivative of this log-likelihood function is given by:

\begin{align}
    \ell^{'}(\zeta) = \frac{1}{\phi \zeta^p} \left(  \sum_{i=1}^n w_iz_i  -  \zeta \sum_{i=1}^n w_i\right).
\end{align}

Thus, the quasi-maximum likelihood estimator (QMLE) of $\zeta$, denoted by \( \widehat{\zeta} \), is obtained as follows: \( \widehat{\zeta} = \frac{\sum_{i=1}^n w_i z_i}{\sum_{i=1}^n w_i} \).

\begin{enumerate}
    \item \textbf{Offset Approach:} Under the offset approach, the QMLE of the homogeneous risk parameter $\zeta$ is given by
\begin{equation}\label{hom.off}
\widehat{\zeta}^{O}
=
\frac{\sum_{i=1}^n t_i^{2-p} z_i}{\sum_{i=1}^n t_i^{2-p}}.
\end{equation}
The corresponding aggregate estimated premium is therefore
\begin{equation}
\sum_{i=1}^n t_i \widehat{\zeta}^{O}
=
\left(\sum_{i=1}^n t_i\right)
\frac{\sum_{i=1}^n t_i^{2-p} z_i}{\sum_{i=1}^n t_i^{2-p}}.
\end{equation}

Exact financial balance, i.e.
\[
\sum_{i=1}^n t_i \widehat{\zeta}^{O}
=
\sum_{i=1}^n t_i z_i,
\]
holds if and only if the $t_i^{2-p}$-weighted average of the observed loss costs $z_i$ coincides with their $t_i$-weighted average. 
This condition is automatically satisfied in degenerate cases such as $p=1$, or when all exposures $t_i$ are equal, in which case the two weighting schemes coincide. However, under the general setting considered in this paper, where $1<p<2$ and exposures satisfy $t_i\le 1$ with heterogeneous durations, there is no structural mechanism in the offset model that enforces this identity.
As a result, aggregate financial balance is not guaranteed under the offset approach and may fail in general, although it can still occur for particular realizations of the portfolio.
Consequently, the aggregate deviation
\[
\sum_{i=1}^n \Delta_i^{O}
=
\sum_{i=1}^n t_i \widehat{\zeta}^{O}
-
\sum_{i=1}^n t_i z_i
\]
is not constrained to vanish.

    \item \textbf{Ratio Approach:} In the ratio approach, where \( Z_i \) also serves as the response variable, the weight for contract \( i \) is defined as \( w_i = t_i \). The QMLE for \( \zeta \), denoted by \( \widehat{\zeta}^{R} \), is calculated as:

\begin{equation} \label{hom.weig}
    \widehat{\zeta}^{R} = \frac{\sum_{i=1}^n t_i z_i}{\sum_{i=1}^n t_i}.
\end{equation}

 In this case, the aggregate observed loss costs matches the sum of the expected premiums, as shown by:

\begin{align*}
    \sum_{i=1}^n t_i \widehat{\zeta}^{R} &=    \sum_{i=1}^n t_i \left(\frac{\sum_{i=1}^n t_i z_i}{\sum_{i=1}^n t_i} \right)\\
    &= \frac{\sum_{i=1}^n t_i z_i}{\sum_{i=1}^n t_i}\sum_{i=1}^n t_i\\
    &= \sum_{i=1}^n t_i z_i.
\end{align*}

So, we conclude that \(\sum_{i=1}^n\Delta_i^R = 0\). This result suggests that the ratio approach inherently ensures that the sum of estimated premiums aligns with the aggregate observed loss costs, making it advantageous for achieving balance at the portfolio level.
\end{enumerate}

\subsubsection{Heterogeneous Portfolio} \label{nonscorehyp}

The homogeneous framework can be extended to account for heterogeneity by incorporating the covariate vector $\mathbf{X}_i$ in the premium model for each contract $i$. This allows for contract-specific premium estimation and captures variation across contracts. In contrast to the homogeneous portfolio, where a closed-form expression for the QMLE of $\bm{\beta}$ was available, the presence of covariates implies that the QMLE of the parameter vector $\bm{\beta}$ no longer admits a closed-form solution. This is due to its dependence on $\mathbf{X}_i$. Consequently, the QMLE of $\bm{\beta}$ is obtained using the IRLS algorithm (see Section~\ref{rap.twee.4}) for both the offset and ratio approaches, as discussed in the previous sections.

To better understand the gap between the aggregate observed loss costs and the sum of estimated premiums in each approach, we initialize the IRLS algorithm with values of \( \bm{\beta} \) based on the homogeneous portfolio assumption. This initialization promotes consistency in comparing the offset and ratio approaches and clarifies how incorporating  \( \mathbf{X}_i\) affects the alignment of estimated premiums with aggregate loss costs.

\begin{enumerate}
    \item \textbf{Offset Approach:} We initialize the IRLS algorithm with:

    $$\beta_0 = \log\left(\widehat{\zeta}^{O}\right),~\text{and} ~ \beta_j = 0, ~\text{for}~ j=1,\ldots,q.$$

    By denoting the initial value by \( \bm{\beta}^{O}_{(0)} \), the estimate of the parameter vector at iteration \( K \), $\bm{\beta}^{O}_{(K)}$,  is obtained as follows:

    $$ \bm{\beta}^{O}_{(K)} = K \bm{\beta}^{O}_{(0)} + \underbrace{ \sum_{k=1}^{K - 1} \left(I^{O}_n\left(\bm{\beta}^{O}_{(k)}\right)\right)^{-1} \nabla^{O}\left(\bm{\beta}^{O}_{(k)} \right)}_{\delta^O_{K - 1}} = K \bm{\beta}^{O}_{(0)} + \delta^O_{K - 1},$$

     where \( I^{O}_n(\cdot) \) is the Fisher information matrix for the offset approach. Therefore, we calculate the estimate premium at iteration \( K \), denoted by  \(\widehat{\zeta}^O_{i(K)} \), as follows:

     $$\widehat{\zeta}^O_{i(K)} =  \left(\widehat{\zeta}^{O}\right)^K \exp\left\{  \mathbf{X}^{T}_i \delta^O_{K - 1}\right\},$$

where $\widehat{\zeta}^{O}$, defined by equation (\ref{hom.off}), represents the annual premium for all insureds in the offset approach for the homogeneous portfolio. We can conclude from this last equation that the sum of the estimated premiums at iteration \( K \), denoted as \( \sum_{i=1}^{n} t_i\widehat{\zeta}^O_{i(K)} \), does not necessarily equal the sum of the observed loss costs, \( \sum_{i=1}^{n} t_i z_i \). This discrepancy is typical in the offset approach, given its weighted nature and the way the IRLS updates progress over iterations.

     \item \textbf{Ratio Approach:}  We initialize the IRLS algorithm with:

    $$\beta_0 = \log\left(\widehat{\zeta}^{R}\right),~\text{and} ~ \beta_j = 0, ~\text{for}~ j=1,\ldots,q.$$

    By denoting the initial value by \( \bm{\beta}^{R}_{(0)} \), the estimate of the parameter vector at iteration \( K \), $\bm{\beta}^{R}_{(K)}$,  is obtained as follows:

    $$ \bm{\beta}^{R}_{(K)} = K \bm{\beta}^{R}_{(0)} + \underbrace{ \sum_{k=1}^{K - 1} \left(I^{R}_n\left(\bm{\beta}^{R}_{(k)}\right)\right)^{-1} \nabla^{R}\left(\bm{\beta}^{R}_{(k)} \right)}_{\delta^R_{K - 1}} = K \bm{\beta}^{R}_{(0)} + \delta^R_{K - 1},$$

     where \( I^{R}_n(\cdot) \) is the Fisher information matrix for the ratio approach. Therefore, we calculate the estimate premium at iteration \( K \) as follows:

     $$\widehat{\zeta}^R_{i(K)} = \left( \widehat{\zeta}^{R}\right)^K \exp\left\{ \mathbf{X}^{T}_i \delta^R_{K - 1}\right\},$$

where $\widehat{\zeta}^{R}$, defined by equation (\ref{hom.weig}), represents the annual premium for all insureds in the ratio approach for the homogeneous portfolio. This allows us to calculate the sum of the estimated premiums at iteration \( K \), \(t_i \widehat{\zeta}^R_{i(K)}\), leading to the following equation:  

\begin{eqnarray*}
\sum_{i=1}^{n} t_i \widehat{\zeta}^R_{i(K)} &=& \left(\sum_{i=1}^{n} y_i\right) \times \left( \frac{\left( \widehat{\zeta}^{R}\right)^{K - 1}}{\sum_{i=1}^{n} t_i} \sum_{i=1}^{n} t_i\exp\left\{\mathbf{X}^{T}_i \delta^R_{K - 1}\right\}\right) \\
&=& \sum_{i=1}^{n} y_i \times 
\epsilon_{(K-1)},
\end{eqnarray*}

highlighting the connection between the total premiums and the aggregate claims \(\sum_{i=1}^{n} y_i\).  That means that unless $\epsilon_{(K-1)}$ equals 1, there is no financial equilibrium for this approach. While the sum of estimated premiums may not exactly match the observed aggregate loss costs, this formulation allows for an analysis of $\epsilon_{(K-1)}$, the gap between these two quantities. \\
\end{enumerate}

\subsubsection{Numerical Illustration} \label{simulation}  

To better understand the implications of the parameter estimation equations underlying the two approaches—particularly in the heterogeneous case, where no explicit analytical solution is available—we propose a simple numerical illustration. 
The objective of this illustration is not to assess the statistical performance of the estimators, but rather to highlight the structural differences between the estimation equations and to clarify how these differences translate into distinct behaviors with respect to the financial balance criterion.

The setup is as follows:
\begin{enumerate}
    \item \textbf{Portfolio setup.} 
    In the homogeneous case, we consider a portfolio of \( n = 100 \) independent contracts without covariates. 
    In the heterogeneous case, two risk factors are incorporated into the mean specification of each contract. These covariates are generated independently from binomial distributions:
    \begin{itemize}
        \item \( x_1 \sim \text{Binomial}(100, 0.75) \),
        \item \( x_2 \sim \text{Binomial}(100, 0.15) \).
    \end{itemize}
    This specification induces heterogeneity in the portfolio while remaining sufficiently simple to allow a transparent interpretation of the estimation equations.

    \item \textbf{Risk exposure.} 
    Exposure levels \( t_i \) are drawn from a uniform distribution on the interval
    \[
        \left[\tfrac{30}{365}, \tfrac{335}{365}\right],
    \]
    and then sorted in increasing order to assign contract indices \( i = 1, \ldots, n \). 
    This ordering reflects a common actuarial situation in which higher indexed contracts correspond to larger exposure levels.

    \item \textbf{Loss cost.} 
    Rather than assuming a stochastic loss-generating mechanism, we consider a deterministic specification given by
    \[
        y_i = i, \qquad i = 1, \ldots, n.
    \]
    This choice is deliberate and aims to isolate the effect of the estimation equations themselves. By eliminating random noise, the resulting comparison focuses exclusively on the structural differences between the offset and ratio approaches and on their implications for aggregate financial balance.
\end{enumerate}

\begin{figure}[H]
	\begin{center}
		\includegraphics[scale=0.45]{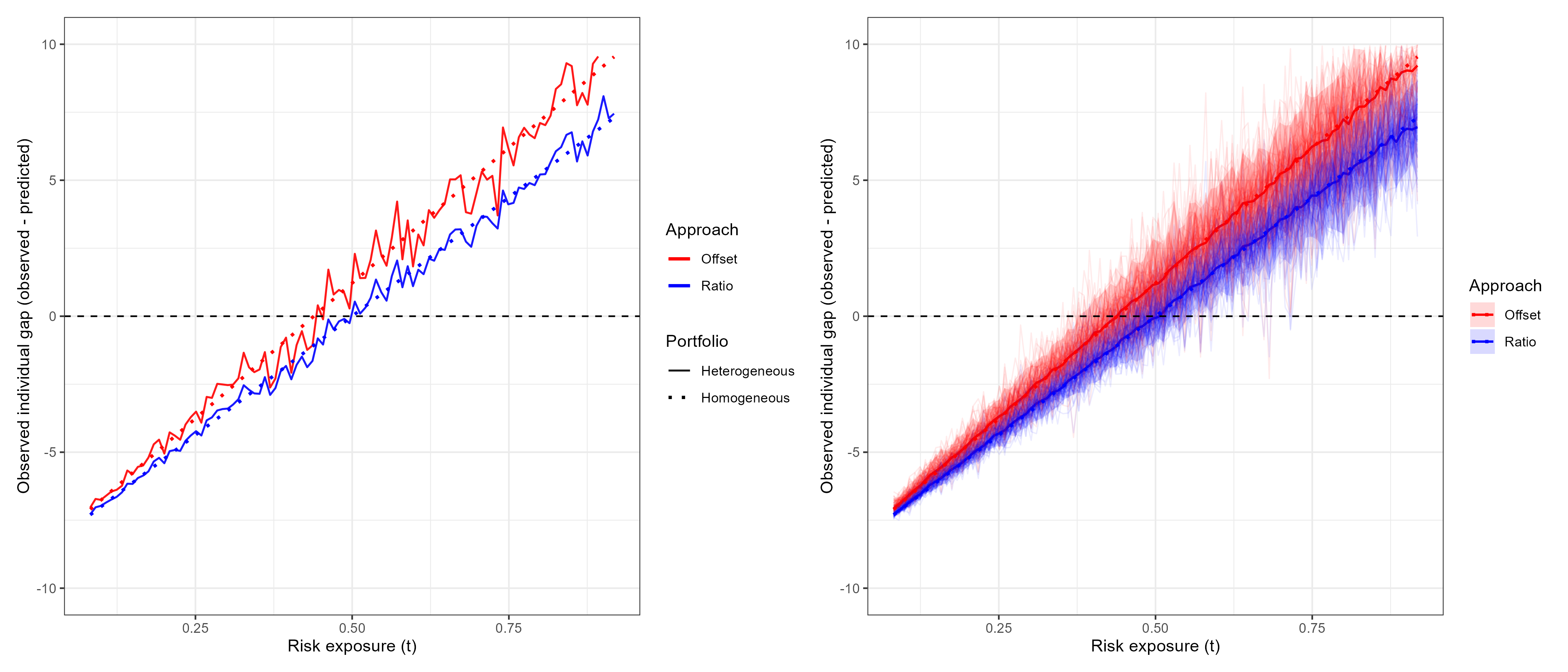}
\caption{Individual observed gaps: single heterogeneous realization (left) and across multiple heterogeneous realizations (right).}
		\label{simul}
	\end{center}
\end{figure}

We then apply the corresponding parameter estimation equations to this portfolio. 
For a single realization, the individual observed gaps are displayed in the left panel of Figure~\ref{simul}. This panel allows a direct comparison across portfolio types (homogeneous versus heterogeneous) and estimation approaches (offset versus ratio). 
Several qualitative insights can be drawn from this figure:

\begin{itemize}
\item \textbf{Ratio approach (solid and dashed blue curves).}

Recall that, for a homogeneous portfolio, the ratio approach satisfies the exact aggregate balance condition $\sum_{i=1}^n \Delta_i^{R} = 0$, as established in Section~\ref{scorehyp}. Although this aggregate quantity is not explicitly reported in the figure, the dashed blue curve corresponds to the individual observed gaps obtained under the homogeneous assumption and therefore sums to zero by construction.

This homogeneous benchmark provides a natural reference for interpreting the solid blue curve, which represents the individual observed gaps obtained under the heterogeneous specification. The two curves are remarkably close over the entire range of exposure levels, indicating that the introduction of covariates does not materially alter the aggregate balance property of the ratio approach. In particular, this visual proximity suggests that the sum \( \sum_{i=1}^n \Delta_i^{R} \) in the heterogeneous case remains close to zero.

At the same time, the small local deviations and step-like variations observed along the solid blue curve reflect the fact that exact financial balance cannot generally be achieved in the heterogeneous setting, except in accidental cases where the sum of these local variations happens to cancel out. These discrepancies arise from the contract-specific nature of the estimated premiums and from the iterative nature of the estimation procedure. Nevertheless, the overall behavior remains strongly anchored to the homogeneous equilibrium, highlighting the robustness of the ratio approach with respect to portfolio heterogeneity.

\item \textbf{Offset approach (solid and dashed red curves).}

As established in Section~\ref{scorehyp}, the offset approach does not satisfy an exact aggregate balance condition, that is, $\sum_{i=1}^n \Delta_i^{O} \neq 0.$  This feature is directly visible in Figure~\ref{simul}, where both red curves display a systematic displacement relative to the ratio approach, which serves as a natural benchmark since its aggregate gap is zero by construction.

In the homogeneous case (dashed red curve), this displacement is already apparent, as the offset curve lies uniformly above its ratio counterpart, indicating a departure from financial equilibrium. In the heterogeneous case (solid red curve), the offset curve remains centered around its homogeneous counterpart, suggesting that the imbalance persists even in the presence of additional local variability arising from contract-specific covariates.
\end{itemize}

To further assess the impact of parameter estimation on financial balance, we consider a Monte Carlo experiment based on 1{,}000 heterogeneous portfolios generated from varying realizations of \(x_1\) and \(x_2\). The corresponding results are reported in the right panel of Figure~\ref{simul}. In this panel, all 1{,}000 individual gap curves are displayed simultaneously as semi-transparent lines, providing a direct visualization of the outcomes associated with heterogeneous portfolios. To summarize this variability, a pointwise 95\% envelope is added in the form of a shaded ribbon, while the bold solid curve represents the median gap across replications. We can see that the individual gap curves obtained across replications are strongly concentrated around their respective homogeneous benchmarks for both approaches. This is reflected by the close proximity of the median heterogeneous curve to the corresponding homogeneous curve shown in the left panel, as well as by the relatively narrow width of the 95\% ribbons over the entire range of risk exposure. These patterns indicate that heterogeneity primarily introduces dispersion around the homogeneous structure, rather than inducing systematic shifts in the gap curves themselves.

An important implication of this concentration phenomenon is that the aggregate imbalance observed under the homogeneous specification provides a meaningful first-order diagnostic for the heterogeneous case. Since heterogeneity mainly adds random fluctuations around the homogeneous benchmark, a substantial lack of financial balance at the homogeneous level is unlikely to be systematically corrected once heterogeneity is introduced. Conversely, when the homogeneous imbalance is small, near-balance under heterogeneous modeling becomes more plausible, albeit still not structurally enforced. In this respect, the two approaches differ fundamentally. Exact aggregate balance holds by construction for the homogeneous ratio approach, whereas no such property holds for the homogeneous offset approach. As a result, the concentration of heterogeneous gap curves around their homogeneous counterparts implies that, in heterogeneous portfolios, the ratio approach is typically closer to financial equilibrium than the offset approach.

Finally, we note that the offset approach may occasionally exhibit a better financial balance than the ratio approach for specific realizations, as observed in the simulation results reported in Figure~\ref{RatioMonteCarlo}, where it is selected in approximately 13\% of the replications. As illustrated in Figure~\ref{simul}, such outcomes can partly be attributed to more pronounced random fluctuations, whose aggregate effect may incidentally compensate for the residual imbalance. However, these situations are also more likely to occur in scenarios where the homogeneous offset specification itself happens to be close to financial balance. In such cases, the additional dispersion introduced by heterogeneity may suffice to produce a near-balanced aggregate outcome under the offset approach. Since this compensation mechanism is not structural but contingent on the particular data realization and on the initial homogeneous imbalance, the ratio approach generally provides a more reliable and robust financial balance across heterogeneous portfolios.

\section{Empirical Illustration}\label{ap.num}

To illustrate the offset and ratio approaches with real insurance data, we utilize a non-random sample from a car insurance database maintained by a major Canadian insurer, spanning 13 consecutive years. The dataset pertains to the province of Ontario and includes over 2 million observations, with each observation representing an annual insurance contract for an individual vehicle. For each contract, the dataset includes key identifiers such as policy number, vehicle ID, and contract start and end dates. The dataset also contains detailed information about the insured individuals and their vehicles, as well as the number and cost of claims associated with each contract. The database is segmented by coverage type, providing insights into third-party liability, collision, and comprehensive claims.

For the purposes of this study, we focus on a single coverage type: collision coverage, which addresses property damage protection for at-fault accidents. To ensure the independence of contracts, we modified the original dataset to include only one vehicle contract per policy. In this revised dataset, each policy corresponds to a unique contract, reducing the dataset to approximately 161,390 contracts.

\subsection{Description of Data}\label{Datadescrip}

\subsubsection{Description of Contracts}\label{Contractdesc}

A standard car insurance contract typically carries a one-year risk exposure; however, cancellations can occur within this period. To offset the administrative costs associated with such cancellations, the insured is subject to a financial penalty for breaching the contract. Several scenarios can lead to mid-term cancellations:

\begin{enumerate}
    \item The insured sells the vehicle and no longer requires insurance coverage.
    \item The insured opts to switch to another insurer, potentially attracted by a lower cost despite the cancellation penalty.
    \item The insured experiences a significant loss (e.g., theft or accident) and subsequently changes vehicles, rendering the original insurance coverage unnecessary.
\end{enumerate}

In cases \#1 and \#3, it could be argued that the situation constitutes a vehicle change rather than an outright cancellation of insurance. In practice, insurers often do not impose penalties for such changes, provided the insured continues coverage with the same insurer for the new vehicle. However, a vehicle change frequently enables the insured to explore options with other insurers, which can quickly lead to scenario \#2.

\subsubsection{Summary of Data}\label{Contractrisk}

In Table \ref{ratio.losscost}, we present a summary of the data available for the numerical application. It shows that the majority of the observations in the database come from insured individuals covered for a full year. However, a significant portion of the contracts observed in the database corresponds to mid-term cancellations. It can be observed that the average exposure of contracts without cancellations is 1, while contracts with mid-term cancellations have an average exposure of approximately 0.5. The table also indicates the average risk exposure for each group. The last column presents what we refer to as the \textbf{Loss Cost Reference}, which is derived by dividing the average loss cost of each group by the portfolio’s average \footnote{The use of the \textit{Loss Cost Reference} helps prevent excessive disclosure of confidential information from the insurer who provided us with the data.}. A notable difference emerges between the two groups, indicating that insureds with mid-term cancellations tend to have higher insurance damages than those in the other group.

 \begin{table}[H]
	\centering
	\begin{tabular}{cccc}
		\hline
		& Proportion   & Average   Risk    & Loss Cost \\
		Group & of contracts &  Exposure & Reference\\ 
		\hline
                Full exposure & 64\% & 1.00 & 0.63\\ 
        Mid-term cancellation & 36\% & 0.51 & 2.45\\
		\hline
	\end{tabular}
	\caption{Descriptive statistics by group of contracts}
	\label{ratio.losscost}
\end{table}

Figure \ref{truncdesc} presents a detailed view of the risk exposure distribution for insured individuals with mid-term cancellations, while the risk exposure for other insured individuals remains constant at 1. The distribution shows that risk exposure is fairly uniform throughout the year; however, there is still considerable variation.

\begin{figure}[H]
	\begin{center}
		\includegraphics[width=0.7\textwidth]{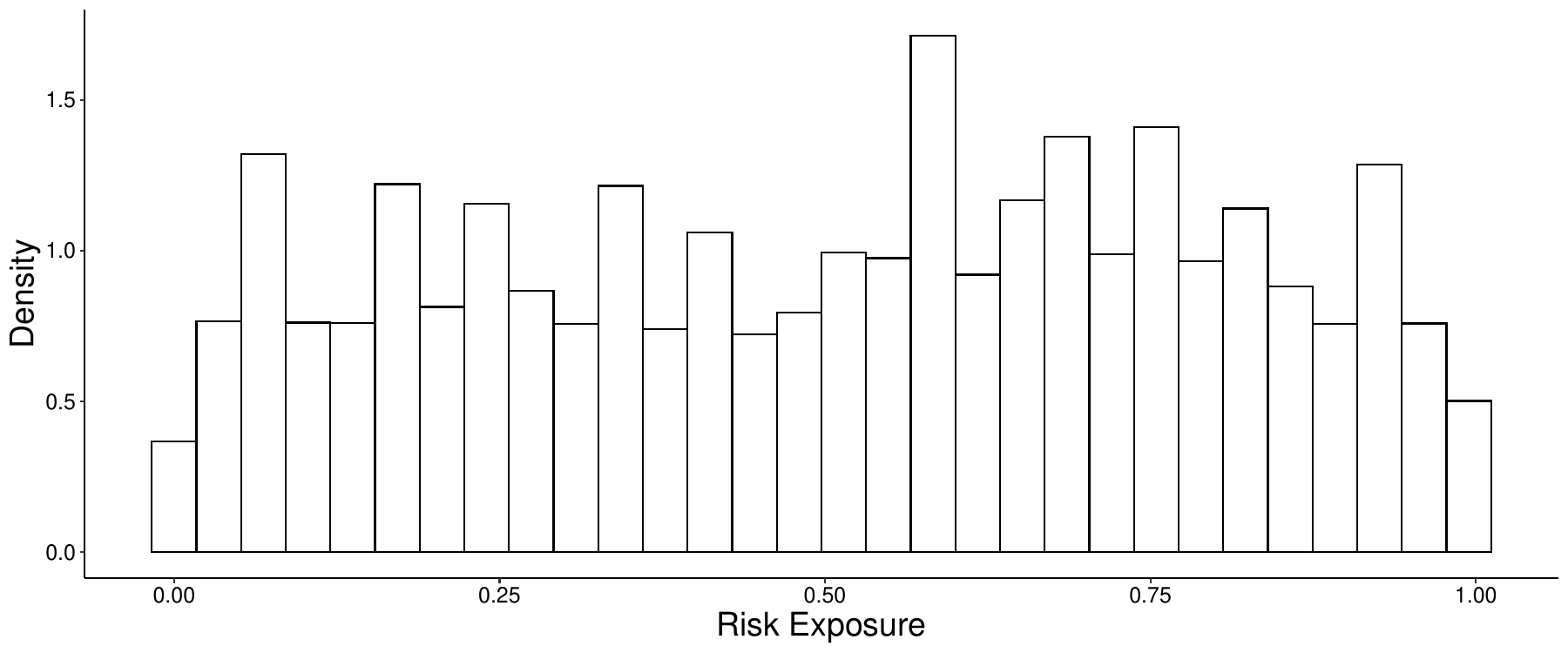}
     \caption{Distribution of risk exposures }
    \label{truncdesc}
	\end{center}
\end{figure}

To further examine the relationship between risk exposure and loss costs, we grouped contracts by the total number of coverage months, from 1 to 11. For each coverage duration, we calculated the \textit{loss cost reference} and plotted these values against the average risk exposure, as shown in Figure \ref{statdesbygroup}. Insured individuals with full exposure (coverage for 12 months) do not appear in the groups for months 1 to 11; thus, only their average value is displayed in Figure \ref{statdesbygroup} as a dashed line.

\begin{figure}[H]
	\begin{center}
		\includegraphics[width=0.7\textwidth]{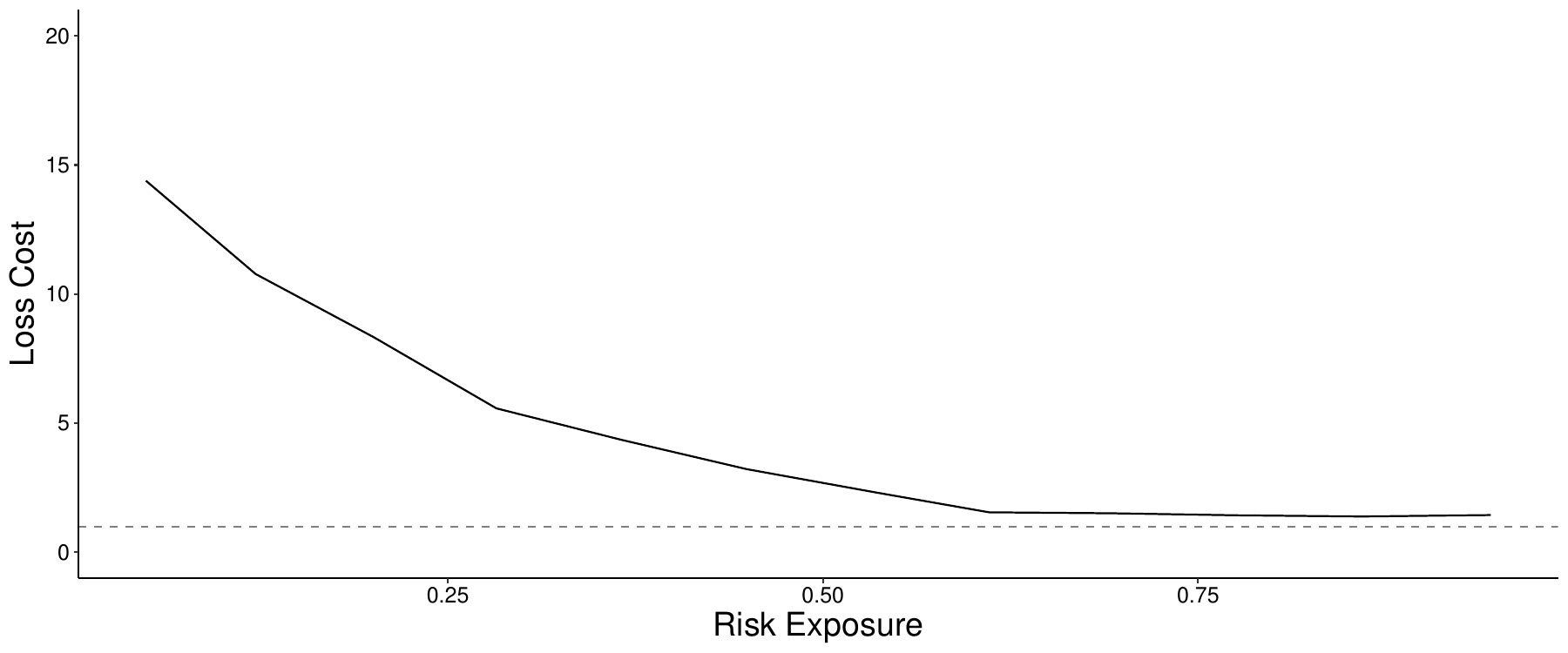}
    \caption{Loss cost references by risk exposures}
    \label{statdesbygroup}
	\end{center}
\end{figure}

For insured individuals with mid-term cancellations, the loss cost reference curves exhibit a clear downward trend as risk exposure increases, which clearly aligns with the Decreasing Scenario in Section \ref{simulation}. This seems to imply that policyholders who cancel at the beginning of their contract have more claims, or claims with greater severity, than others.
This finding highlights the significant role of risk exposure in explaining contract costs. Although traditional pricing models face challenges in predicting an insured’s risk exposure or the likelihood of cancellation before the end of their contract, incorporating this information into pricing models is crucial for accurately estimating parameters and setting premiums.

\subsubsection{ Covariates Description}\label{Covariatesanalysis}

The dataset contains various characteristics for each vehicle and contract. To examine the impact of segmentation on pricing, we selected nine key characteristics as covariates, denoted as \(X_0\), \(X_1\), \(X_2\), \(X_3\), \(X_4\), \(X_5\), \(X_6\), \(X_7\), and \(X_8\) for confidentiality purposes. \(X_0\) represents a covariate with a constant value of 1 across all contracts, while the other covariates (\(X_1\) through \(X_8\)) are binary variables. These covariates capture conventional risk factors such as the insured’s gender, age, vehicle usage, and type of vehicle, in line with standard actuarial practices. To simplify the model, we excluded risk factors that are not typically used in pricing models. As such, we believe the resulting pricing model reflects standard industry practices. Descriptive statistics for these covariates are provided in Figure~\ref{cov.stat}.

\begin{figure}[H]
	\begin{center}
		\includegraphics[scale=0.6]{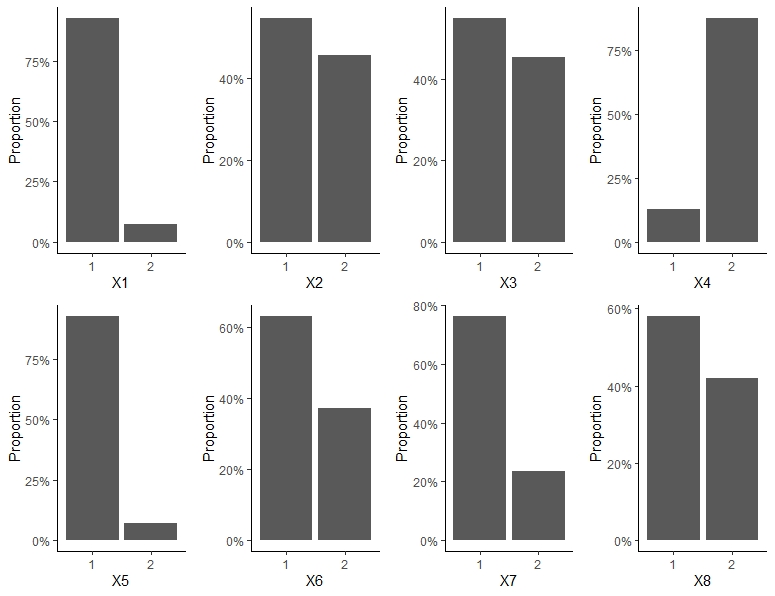}
	\caption{Descriptive statistics of all 8 covariates from the database}
		\label{cov.stat}
	\end{center}
\end{figure}

\subsection{Offset and Ratio Approaches}\label{ap.num.}

To compare the offset and ratio regression approaches, we estimate the parameters using a Tweedie distribution for each segment of the portfolio (insureds with full exposure and those with mid-term cancellation). Furthermore, for each approach (ratio and offset), we calculate the premium for each contract using both methods. It is worth noting that our dataset is a sample from \cite{raissaboucher2024}. Accordingly, we applied the same estimated variance parameter, \( p = 1.42 \), as used in their study for premium modelling. As discussed in Section \ref{rap.twee},  the dispersion parameter does not affect premium modelling; thus, we do not emphasize its role in this analysis.

\subsubsection{Estimated Parameters}\label{ap.num.2.1}

The estimated parameters $\bm{\beta}$ for covariates \(X_1 - X_8\) can be different for both approaches.  To compare the estimated values, we compute a coefficient ratio for each covariate by dividing the estimated coefficient from the offset approach by the corresponding coefficient from the ratio approach.  If the estimated values were identical for both approaches, this ratio would be 1 for all covariates. These results are displayed in Figure \ref{estimparam}. Without surprise, the estimated coefficients are identical in both the offset and ratio approaches, which was expected because the offset and ratio approaches yield equivalent results when the risk exposure is equal to 1.

On the other hand, for contracts with mid-term cancellations, the estimated coefficients differ noticeably between the offset and ratio approaches. This divergence is anticipated because, as discussed in Section \ref{ofwe.twee}, the log-likelihood and gradient functions differ for the two approaches when risk exposure is less than 1. The significant differences, particularly in coefficients associated with covariates \( X_1\), \( X_2\), and \( X_4\), highlight how the choice of approach impacts premium estimation. These differences imply that premiums for certain policyholder profiles will vary substantially based on the chosen approach. 

\begin{figure}[h]
  \centering
    \includegraphics[width=0.75\textwidth]{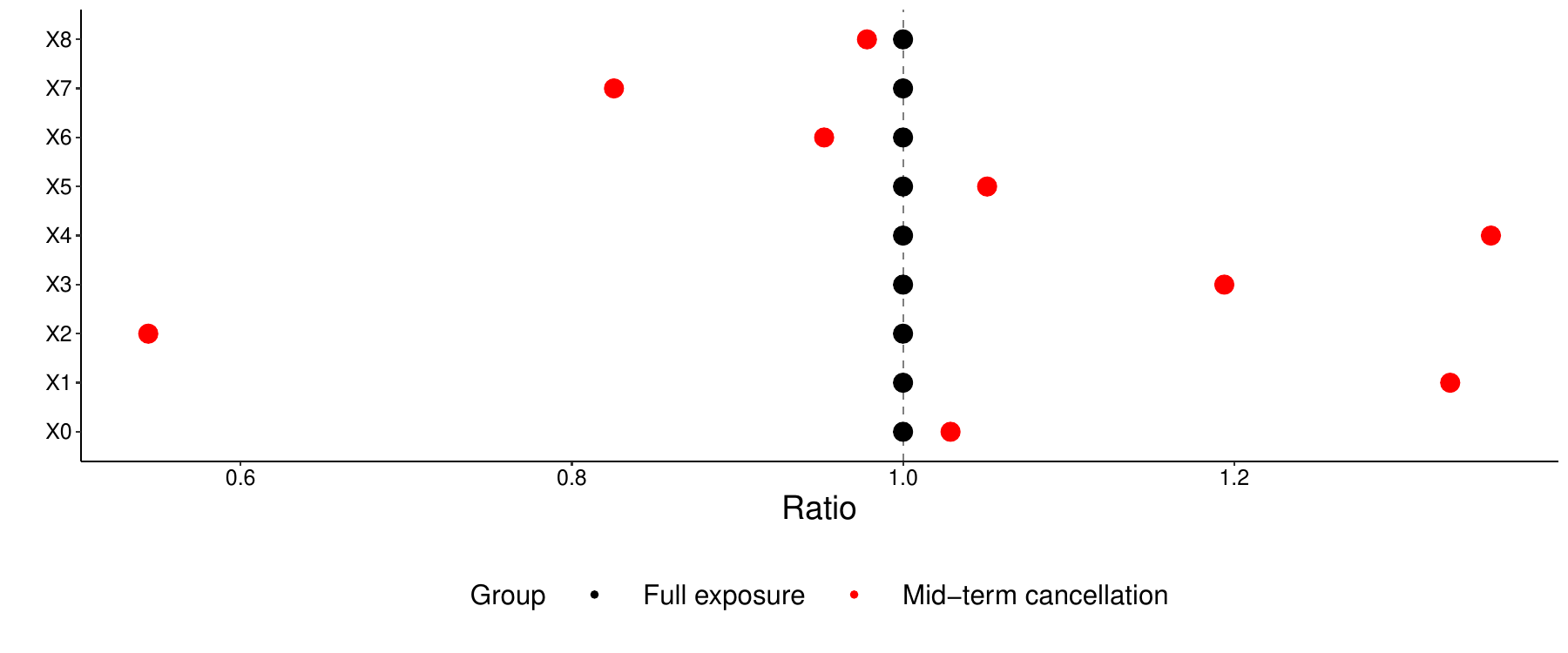}
    \caption{Ratio of estimated parameter vectors}
    \label{estimparam}
\end{figure}

\subsubsection{Estimated Premiums}\label{ap.num.2.1b}

We also compared the distribution of premiums for each approach. To do so, we calculated the premium ratio for each method by dividing the estimated premium for each contract under the offset approach by the premium for the same contract under the ratio approach. The distribution of the ratio of estimated premiums, with box-plots, is presented in Figure \ref{estimpremium}. For the group of insured individuals covered for a one-year period, we observe identical estimated premiums for all contracts in this group (Figure \ref{estimpremium}). This outcome is expected, as all contracts in this group have a risk exposure of 1, and, as previously discussed, the offset and ratio approaches yield equivalent results in this situation.  In contrast, for contracts with mid-term cancellations, we observe premium ratios greater than 1, indicating that the offset approach produces higher premium estimates compared to the ratio approach. This finding is consistent with the Decreasing Scenario discussed in Section \ref{simulation}, where the average claim amount decreases as the average risk exposure increases. 

\begin{figure}[h]
  \centering
      \includegraphics[width=0.75\textwidth]{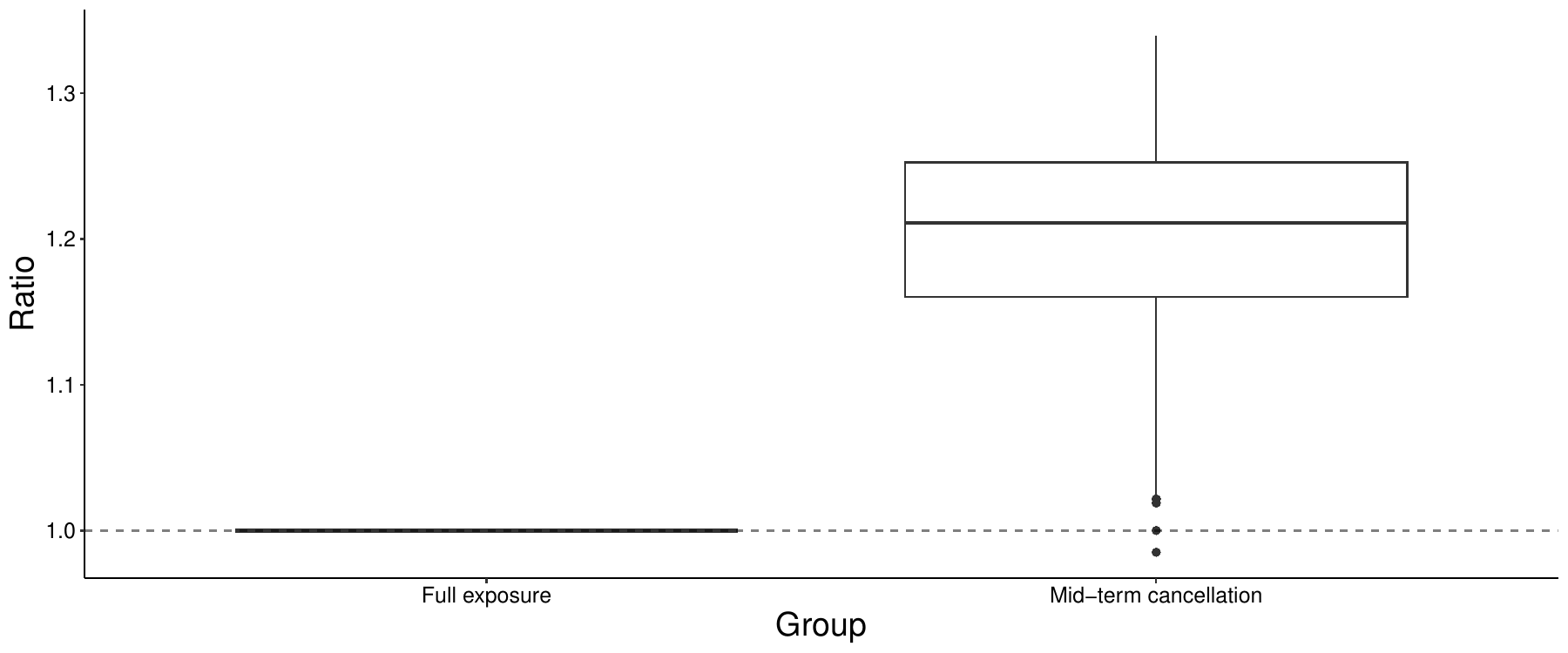}
    \caption{Box-plot of the ratio of estimated premiums}
    \label{estimpremium}
\end{figure}

\subsubsection{Financial Equilibrum Analaysis}\label{ap.num.2.3}

Financial equilibrium requires that the aggregate observed loss costs closely align with the sum of estimated premiums at each level of the risk factors in the regression model. To assess this, we calculated the aggregate observed loss costs and the sum of estimated premiums for each level of our eight risk factors, across all contract groups and approaches. These values were then organized in ascending order based on the aggregate observed loss costs, and we computed ratios by dividing the sum of estimated premiums (from both the offset and ratio approaches) by the aggregate observed loss costs. These ratios are plotted against aggregate observed loss costs in Figure \ref{balance}, with risk classes labeled as integers from 1 to 17.

In the full-exposure group, the sum of individual observed gaps is identical for both the offset and ratio approaches, which is explained by the fact that the estimated premiums are identical under the two methods. As noted previously, it is nevertheless interesting to observe that, in the heterogeneous data, the sums of predicted and observed losses by risk class may differ even when the risk exposure equals one. A similar phenomenon is observed under homogeneous data when the offset approach is used. For the mid-term cancellation group, all ratios are reported in the right panel of Figure~\ref{balance}. The ratio approach yields a closer alignment between the sum of estimated premiums and the aggregate observed loss costs than the offset approach. In particular, the ratio of the sum of estimated premiums to the aggregate observed loss cost under homogeneous data using the offset approach is equal to $1.22$, whereas the ratio approach yields exact aggregate balance under homogeneity. As discussed in Section~\ref{simulation}, ratios of similar magnitude are observed for both approaches in the heterogeneous case.

\begin{figure}[H]
	\begin{center}
		\includegraphics[scale=0.45]{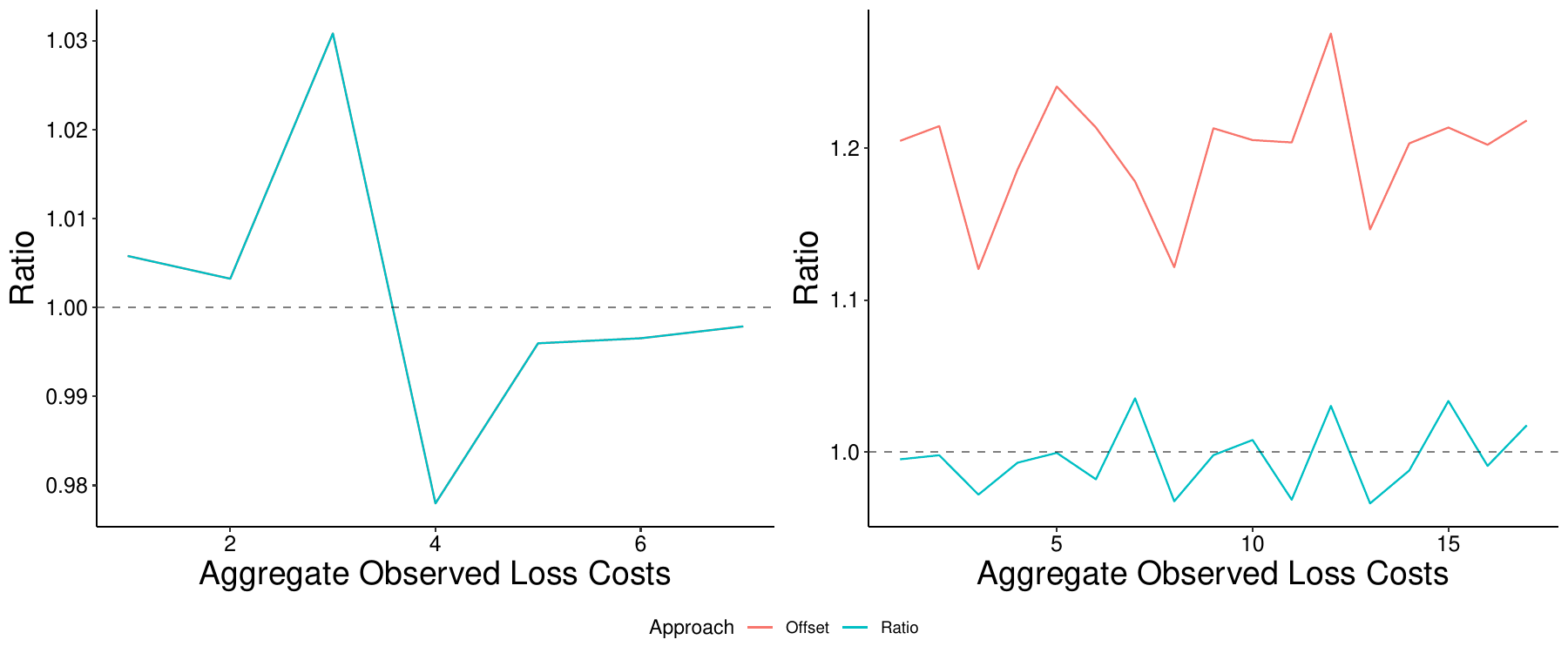}
		\caption{Sum of individual observed gaps comparison (left: Full exposure, right: Mid-term cancellation)}
		\label{balance}
	\end{center}
\end{figure}

\subsection{Discussion} \label{discussion}

Even though neither of the two approaches studied is perfect in every respect, the advantages of using the ratio approach for a Tweedie regression are clear for the insurance portfolio we analyzed. However, the analysis of the modeling choices considered in this paper also brings to light several important issues, which may lead to different conclusions or modeling strategies depending on the portfolio structure and the underlying risk characteristics.

First, the relevance of the offset or ratio approach strongly depends on the composition of the insurance portfolio. As illustrated in Figure~\ref{weightsgraph}, the impact of the chosen approach remains relatively limited for portfolios characterized by a low frequency of mid-term cancellations, such as home insurance or certain commercial insurance lines, where most contracts exhibit full-year exposures. In such contexts, the offset and ratio approaches yield similar results. In contrast, for portfolios with a substantial proportion of partial-year exposures—typical of certain insurance markets, such as North American automobile insurance—the choice between the two approaches becomes critical and can materially affect the modeling outcomes.

These differences are not only technical but also have direct implications for premium determination and policyholder fairness. As shown by the estimated parameters reported in Figure~\ref{estimparam}, the offset and ratio approaches may lead to substantially different premiums for the same insured risk. In a competitive insurance market, such discrepancies may influence which policyholder profiles are favored or disadvantaged by each approach. Understanding how these pricing differences affect fairness perceptions and market competitiveness is therefore an important consideration when selecting a ratemaking methodology.

The empirical results further emphasize the central role of mid-term cancellations in shaping loss costs. Table~\ref{ratio.losscost} and Figure~\ref{truncdesc} show that cancellations have a significant impact on observed loss costs, underlining the need to address this phenomenon more explicitly in pricing models. Rather than incorporating exposure mechanically within a single Tweedie framework, as done in this paper, a more detailed analysis of cancellation behavior could allow insurers to develop more refined modeling strategies. In particular, distinguishing between different causes of cancellations may help capture heterogeneous risk dynamics and improve premium adequacy.

Finally, the analysis raises broader questions regarding the assumption of strict linear proportionality between exposure and expected loss cost, an assumption that underlies both the offset and ratio approaches. While this assumption is widely adopted in actuarial practice, the descriptive results presented in Figure~\ref{statdesbygroup} suggest that loss costs do not necessarily scale linearly with exposure. Shorter exposure periods, in particular, appear to be associated with systematically different loss dynamics, consistent with several scenarios discussed in Section~\ref{Contractdesc}. Although both approaches considered here impose linearity by construction, these empirical patterns indicate that relaxing this assumption could provide additional modeling flexibility and lead to a more accurate representation of exposure-related heterogeneity.

\section{Conclusion}\label{concl}

This paper has examined two widely used strategies for incorporating risk exposure into premium estimation under Tweedie regression models: the offset approach and the ratio approach. Although both methods rely on the same proportionality assumption between exposure and expected loss, we have shown that they induce fundamentally different weighting structures and therefore lead to distinct statistical and financial properties. From a purely inferential perspective, both approaches yield consistent estimators of the true parameter vector under mild regularity conditions. A first-order asymptotic analysis based on the leading Gaussian term of the Edgeworth expansion suggests that the offset approach may exhibit greater efficiency, in the sense of a smaller asymptotic covariance matrix. Simulation results seem to confirm this comparison, indicating that in finite samples the offset approach also displays smaller empirical variance. Beyond parameter estimation, this paper emphasizes that premium modeling in insurance is inherently a financial exercise, for which aggregate balance considerations play a central role. From this standpoint, a key contribution of this work is to show that the ratio approach exhibits a structurally stronger proximity to financial equilibrium. In homogeneous portfolios, it satisfies an exact aggregate balance condition by construction, while in heterogeneous settings it remains systematically closer to balance than the offset approach. Importantly, this property is not driven by asymptotics or incidental compensation effects, but follows directly from the structure of the estimating equations. In the final section of the paper, we complement the theoretical and simulation-based analyses with an empirical study based on a real Canadian automobile insurance portfolio. This application allows for a direct comparison of the offset and ratio approaches both in terms of parameter estimates—where the two approaches yield different values—and in terms of aggregate financial balance. The empirical results indicate that the ratio approach achieves a closer alignment between the sum of estimated premiums and the aggregate observed loss costs, whereas the offset approach exhibits a systematic imbalance.

The empirical results also suggest that a more refined examination of the relationship between risk exposure and observed loss costs may be warranted. In particular, the assumption of a strictly linear proportionality between exposure and expected loss appears to be restrictive in the presence of mid-term cancellations and heterogeneous exposure patterns, even though such proportionality is implicitly imposed by both modeling approaches considered in this paper. One important source of this heterogeneity arises when vehicles are replaced during the policy term, either following the sale and purchase of another vehicle or, as discussed in Section~\ref{ap.num}, after a significant loss that renders the original coverage unnecessary. In this context, more detailed tracking of vehicle replacements within insurers’ databases could help improve the representation of exposure-related risk. More generally, these findings indicate that, while risk exposure $t$ remains a vehicle-level quantity, its relationship with risk may depend on the broader exposure configuration of the policy to which the vehicle belongs. In particular, when multiple vehicles are insured sequentially or concurrently under the same policy, considering interactions among vehicle-level exposures may improve the representation of the underlying risk structure.

\bibliography{bibtex}
	
\newpage

\setcounter{section}{0}
\setcounter{subsection}{0}

\section*{Appendix 1: Calculation of Asymptotic Covariance Matrices}\label{appendix2}

This appendix summarizes results from \citet{white1982maximum} that are directly relevant to Proposition \ref{proposition.00}. We provide additional details on the calculation of the asymptotic covariance matrices of the parameter estimators in the Offset and Ratio approaches. To this end, we begin by recalling the expression of the response variance in both approaches, which is fundamental for deriving the covariance matrices:

$$ \Var{Z_i \mid \mathbf{X}_i,w_i} = \frac{\phi}{w_i} \zeta_i^p, \quad i=1,\ldots,n,$$

where \( \zeta_i = \exp\{\mathbf{X}_i^\top \bm{\beta}\}\), and $w_i = t_i^{2-p}$ under the Offset approach while $w_i = t_i$ under the ratio approach. 

For convenience, we also introduce the notation
\[
\bm{D}(W) = \diag\!\left(w_i \zeta_i^{2-p}\right)_{i=1,\ldots,n},
\]
with the vector $\bm{W} = (w_1,\ldots,w_n)^\top$. 
This expression for the variance is used in the computation of the asymptotic covariance matrix of the parameter estimator. 
Following \cite{white1982maximum}, this matrix is defined as
\[
\Sigma = A^{-1}BA^{-1},
\]
where $A = - \tfrac{1}{\phi}\bm{X}^\top \bm{D}(W)\bm{X}$ and 
$B = \mathbb{E}\!\left[\nabla(\bm{\beta}) \nabla(\bm{\beta})^{\top}\right]$, 
with the score vector
\[
\nabla(\bm{\beta}) = \frac{1}{\phi} \sum_{i=1}^n w_i \frac{Z_i - \zeta_i}{\zeta_i^{p-1}} \mathbf{X}_i.
\]

Then, the matrix $B$ is
\begin{align*}
   B &= \frac{1}{\phi^2} \Esp{ 
   \left( \sum_{i=1}^n w_i \frac{Z_i - \zeta_i}{\zeta_i^{p-1}} x_{i,j_1} \right)
   \left( \sum_{i=1}^n w_i \frac{Z_i - \zeta_i}{\zeta_i^{p-1}} x_{i,j_2} \right)}_{j_1,j_2=0,\ldots,q}  \\
   &= \frac{1}{\phi^2} \Esp{
   \sum_{i=1}^n w_i^2 \frac{(Z_i - \zeta_i)^2}{\zeta_i^{2p-2}} 
   x_{i,j_1} x_{i,j_2}}_{j_1,j_2=0,\ldots,q}  \\
   &= \frac{1}{\phi^2} \sum_{i=1}^n w_i^2 \frac{\Var{Z_i \mid \mathbf{X}_i,w_i}}{\zeta_i^{2p-2}} 
   x_{i,j_1} x_{i,j_2}  \\
   &= \frac{1}{\phi^2} \sum_{i=1}^n w_i^2 \frac{\tfrac{\phi}{w_i}\zeta_i^p}{\zeta_i^{2p-2}} 
   x_{i,j_1} x_{i,j_2}  \\
   &= \frac{1}{\phi} \sum_{i=1}^n w_i \zeta_i^{2-p} x_{i,j_1} x_{i,j_2}  \\
   &= \frac{1}{\phi}\bm{X}^\top \bm{D}(W) \bm{X}  \\
   &= -A. 
\end{align*}

Hence, $B = -A$, and the asymptotic covariance matrix reduces to
\[
\Sigma = -A^{-1}.
\]

\end{document}